\newtheorem{mydefinition}{Definition}
\newtheorem{myproposition}{Proposition}
\newtheorem{mytheorem}{Theorem}
\newtheorem{mycorollary}{Corollary}
\newtheorem{myexample}{Example}
\newtheorem{mylemma}{Lemma}
\newtheorem{problem}{Problem}
\newtheorem{myobservation}{Observation}
\newcommand{\core}{C_{k, \Delta}}
\newcommand{\baseline}{\textsf{Na\"ive-span-cores}}
\newcommand{\cores}{\textsf{Span}-\textsf{cores}}
\newcommand{\innermosts}{\textsf{Maximal-span-cores}}
\newcommand{\baselineinnermosts}{\textsf{Na\"ive-maximal-span-cores}}
\newcommand{\innermost}{\textsf{innermost-core}}
\newcommand{\spancore}{span-core\xspace}
\newcommand{\Spancore}{Span-core\xspace}
\newcommand{\spancores}{\spancore{s}\xspace}
\newcommand{\maxspancoreprob}{\textsc{Maximal \Spancore Mining}\xspace}
\newcommand{\coresset}{\mathbf{C}}
\newcommand{\coressetdelta}{\mathbf{C}_\Delta}
\newcommand{\imcores}{\mathbf{C}_M}
\newcommand{\bigO}{\mathcal{O}}
\newcommand{\tdeg}{\mbox{\ensuremath{d}}}
\newcommand{\spara}[1]{\smallskip\noindent{\bf #1}}
\newcommand{\squishlist}{
 \begin{list}{$\bullet$}
  {  \setlength{\itemsep}{0pt}
     \setlength{\parsep}{3pt}
     \setlength{\topsep}{3pt}
     \setlength{\partopsep}{0pt}
     \setlength{\leftmargin}{2em}
     \setlength{\labelwidth}{1.5em}
     \setlength{\labelsep}{0.5em}
} }
\newcommand{\squishlisttight}{
 \begin{list}{$\bullet$}
  { \setlength{\itemsep}{0pt}
    \setlength{\parsep}{0pt}
    \setlength{\topsep}{0pt}
    \setlength{\partopsep}{0pt}
    \setlength{\leftmargin}{2em}
    \setlength{\labelwidth}{1.5em}
    \setlength{\labelsep}{0.5em}
} }
\newcommand{\squishdesc}{
 \begin{list}{}
  {  \setlength{\itemsep}{0pt}
     \setlength{\parsep}{3pt}
     \setlength{\topsep}{3pt}
     \setlength{\partopsep}{0pt}
     \setlength{\leftmargin}{1em}
     \setlength{\labelwidth}{1.5em}
     \setlength{\labelsep}{0.5em}
} }
\newcommand{\squishend}{
  \end{list}
}
\begin{document}

\title[Mining (maximal) span-cores from temporal networks]{Mining (maximal) span-cores from temporal networks}

\author{Edoardo Galimberti}
\affiliation{\small
  \institution{ISI Foundation, Italy}
  \institution{University of Turin, Italy}
}
\email{edoardo.galimberti@isi.it}

\author{Alain Barrat}
\affiliation{\small
  \institution{Aix Marseille Univ, CNRS, CPT,  France}
\institution{ISI Foundation, Italy}
}
\email{alain.barrat@cpt.univ-mrs.fr}

\author{Francesco Bonchi}
\affiliation{\small
  \institution{ISI Foundation, Italy}
  \institution{Eurecat, Barcelona, Spain}
}
\email{francesco.bonchi@isi.it}

\author{Ciro Cattuto}
\affiliation{\small
  \institution{ISI Foundation, Italy}
}
\email{ciro.cattuto@isi.it}

\author{Francesco Gullo}
\affiliation{\small
  \institution{UniCredit, R\&D Dept., Italy}
}
\email{gullof@acm.org}

\renewcommand{\shortauthors}{E. Galimberti~\emph{et al.}}

\begin{abstract}
When analyzing temporal networks, a fundamental task is the identification of dense structures (i.e., groups of vertices that exhibit a large number of links), together with their temporal span (i.e., the period of time for which the high density holds). We tackle this task by introducing a notion of temporal core decomposition where each core is associated with its span: we call such cores \mbox{\emph{span-cores}}.

As the total number of time intervals is quadratic in the size of the temporal domain $T$ under analysis,
the total number of span-cores is quadratic in $|T|$ as well.
Our first contribution is an algorithm that, by exploiting containment properties among span-cores, computes all the span-cores efficiently.
Then, we focus on the problem of finding only the \emph{maximal span-cores}, i.e., span-cores that are not dominated by any other span-core by both the coreness property and the span.
We devise a very efficient algorithm that exploits theoretical findings on the maximality condition to directly compute the maximal ones
without computing all span-cores.

Experimentation on several real-world temporal networks confirms the efficiency and scalability of our methods.
Applications on temporal networks, gathered by a proximity-sensing infrastructure recording face-to-face interactions in schools, highlight the relevance of the notion of (maximal) span-core in analyzing social dynamics and detecting/correcting anomalies in the data.
\end{abstract}

\maketitle \sloppy

\section{Introduction}
\label{sec:introduction}


\enlargethispage{\baselineskip}
A temporal network is a representation of entities (vertices), their relations (links), and how these relations are established/broken along time. Extracting dense structures (i.e., groups of vertices exhibiting a large number of links among each other), together with their temporal span (i.e., the period of time for which the high density is observed) is a key mining primitive. This type of patterns enables fine-grain analysis of the network dynamics
and can be a building block towards more complex tasks (such as finding temporally recurring subgraphs or anomalously dense ones) and applications. 
For instance, they can help in studying the contact networks among individuals
to quantify the transmission opportunities of respiratory infections, modeling situations where the risk of transmission is higher,
with the goal of designing mitigation strategies~\cite{Gemmetto2014}. Anomalously dense temporal patterns among entities in a co-occurrence graph (e.g., extracted from the Twitter stream) have also been used to identify, in real-time, events and buzzing stories  \cite{Angel,BonchiBGS16}. In scientific collaboration
and citation networks these patterns can help understand the dynamics of collaboration in successful
professional teams, study the evolution of scientific topics, and detect emerging technologies \cite{erdi2013prediction}.

In this paper we adopt as measure of density of a pattern the \emph{minimum degree} holding among the vertices in the subgraph during the pattern's span. 
The problem of extracting \emph{all} these patterns is tackled by introducing a notion of \emph{temporal core decomposition} in which each core is associated with its \emph{span}, i.e., an interval of
\emph{contiguous timestamps}, for which the coreness property holds.

To the best of our knowledge, this type of core, which we call \emph{span-core}, has never been studied so far.

\enlargethispage{\baselineskip}
\spara{Challenges and contributions.}
As the total number of time intervals is quadratic in the size of the temporal domain $T$ under analysis, also the total number of span-cores is, in the worst case, quadratic in $T$. Nevertheless, exploiting nice containment properties we devise an efficient algorithm for computing all  the  \emph{span-cores}.
Then, we shift our attention to the problem of finding only the \emph{maximal span-cores}, i.e., span-cores that are not dominated by any other span-core by both the coreness property and the span. A straightforward way of approaching the maximal-\spancore-mining problem is to filter out non-maximal \spancores during the execution of an algorithm for computing the whole \spancore decomposition.
However, as the maximal ones are usually much less than the overall span-cores, it would be desirable to have a method that effectively exploits the maximality property and extracts maximal \spancores directly, without computing a complete decomposition.
The design of an algorithm of this kind is an interesting challenge, as it contrasts the intrinsic conceptual properties of core decomposition, based on which a core of order $k$ can be efficiently computed from the core of order $k\!-\!1$, of which it is a subset. For this reason, at first glance, the computation of the core of the highest order would seem as hard as computing the overall core decomposition. Instead, in this work we derive a number of theoretical properties about the relationship among \spancores of different temporal intervals and, based on these findings, we show how such a challenging goal may be achieved.

The contributions of this paper can be summarized as follows:
\squishlist
\item
We introduce the notion of span-core decomposition and maximal span-core in temporal networks. We characterize structure and size of the search space, and prove important containment properties  (Section~\ref{sec:problem}).

\item
We devise an algorithm for computing all span-cores that exploits the aforementioned containment properties and is orders of magnitude faster than a na\"{\i}ve method based on traditional core decomposition (Section~\ref{sec:algorithms}).

\item
We study the problem of finding only the maximal span-cores.
We derive a number of theoretical findings about the relationship among maximal span-cores and exploit them to devise an algorithm that is more efficient than computing all span-cores and discarding the non-maximal ones (Section~\ref{sec:algorithms:maximal}).

\item
We provide a comprehensive experimentation on several real-world temporal networks, with millions of vertices, tens of millions of edges, and hundreds of  timestamps, which attests efficiency and scalability of our methods (Section~\ref{sec:experiments}).

\item
We present applications on face-to-face interaction networks, that illustrate the relevance of the notion of (maximal) span-core in real-life analyses (Section~\ref{sec:casestudies}).

\squishend

The next section overviews the related literature, while Section~\ref{sec:conclusions} discusses future work and concludes the paper.
\section{Background and related work}
\label{sec:related}

\enlargethispage{\baselineskip}
\textbf{Core decomposition.} 
In standard graphs, among the many definitions of dense structures, \emph{core decomposition} plays a central role as it can be computed in linear time~\cite{MatulaB83,batagelj2011fast},
and can speed-up/approximate dense-subgraph extraction according  to various other definitions.
%
For instance, core decomposition allows for finding cliques more efficiently~\cite{EppsteinLS10},
it can be used to approximate the densest-subgraph problem \cite{KortsarzP94}, and betweenness centrality~\cite{HealyJMA06}.

Given a simple graph $G=(V,E)$, let $d(S,u)$ denote the degree of vertex $u \in V$ in the subgraph induced by vertex set $S \subseteq V$, i.e., $d(S,u) = |\{v \in S \mid (u,v) \in E \}|$.

\begin{mydefinition}[Core Decomposition]\label{def:kcores}
 The $k$\emph{-core} (or core of order $k$) of $G$ is a
 \emph{maximal} set of vertices $C_k \subseteq V$ such that $\forall u \in C_k: d(C_k,u) \geq k$.
 The set of all $k$-cores $V = C_0 \supseteq C_1 \supseteq \cdots \supseteq C_{k^*}$ ($k^* = \arg\max_{k} C_k \neq \emptyset$) is the
 \emph{core decomposition} of $G$.
\end{mydefinition}

Core decomposition has been established as an important tool to analyze and visualize complex networks~\cite{DBLP:conf/gd/BatageljMZ99,Alvarez-HamelinDBV05}
in several domains, e.g., bioinformatics~\cite{DBLP:journals/bmcbi/BaderH03,citeulike:298147},
software engineering~\cite{DBLP:journals/tjs/ZhangZCLZ10},
and social networks~\cite{Kitsak2010,GArcia2013}. 
It has been studied under various settings, such as distributed~\cite{DistributedCores1}, streaming/maintenance~\cite{StreamingCores,li2014efficient}, and disk-based~\cite{DiskCores}, and for various types of graph, such as uncertain~\cite{bonchi14cores}, directed~\cite{DirectedCores},
and weighted graphs~\cite{WeigthedCores}.

Core decomposition in \emph{multilayer networks} has been studied in~\cite{GalimbertiBG17}. As any subset of layers is allowed in this setting, the total number of cores is intrinsically exponential. Although temporal networks can be seen as a special case of multilayer networks (where each timestamp is interpreted as a layer), the sequentiality of time represents an important structural constraint: in this paper we are interested in cores that span a temporal interval, and not simply any subset of (potentially non-contiguous) timestamps.
As a consequence, the search space and the number of cores are no longer exponential as in the multilayer case. 
A type of core decomposition for temporal networks has been proposed by Wu~{\em et~al.}~\cite{wu2015core}, who define the $(k,h)$-core  as the largest subgraph in which every vertex has at least $k$ neighbors and at least $h$ temporal connections with each of them.
Therefore, even in the Wu~{\em et~al.}'s definition the sequentiality of connections is not taken into account and non-contiguous timestamps can support the same core.
Our temporal cores have instead a clear temporal collocation and continuous spans, thus the Wu~{\em et~al.}'s definition cannot be reduced to ours (or vice versa).
As we will see in Section~\ref{sec:casestudies}, such a temporal collocation is important in applications.

\spara{Patterns in temporal networks.}
Semertzidis~\emph{et~al.}~\cite{semertzidis2016best} introduce the problem of identifying a set of vertices that are densely connected in all or at least $k$ timestamps of a temporal network.
Similarly, Jethava~and~Beerenwinkel~\cite{jethava2015finding} formulate the densest-common-subgraph problem on an input that can be interpreted as a special type of temporal network, i.e., a set of graphs sharing the same vertex set.
The notion of $\Delta$-clique has been proposed in~\cite{viard2016computing, himmel2016enumerating}, as a set of vertices in which each pair is in contact at least every $\Delta$ timestamps.
Complementary approaches study the problem of discovering dense temporal subgraphs whose edges occur in short time intervals considering the exact timestamp of the occurrences~\cite{rozenshtein2017finding}, and the problem of maintaining the densest subgraph in the dynamic graph model~\cite{epasto2015efficient}.
A slightly different, but still related body of literature focuses on frequent evolution patterns in temporal attributed graphs~\cite{BerlingerioBBG09,inokuchi2010mining, desmier2012cohesive},
link-formation rules in temporal networks~\cite{BringmannBBG10,leung2010mining}, and
the discovery of dynamic relationships and events~\cite{das2011dynamic}
or of correlated activity patterns~\cite{Gauvin2014}.

\section{Problem Definition}
\label{sec:problem}

\enlargethispage{\baselineskip}
We are given a \emph{temporal graph} $G = (V,T,\tau)$, where $V$ is a set of vertices,  $T = [0, 1, \ldots, t_{max}] \subseteq \mathbb{N}$ is a discrete time domain, and $\tau: V  \times V \times  T\rightarrow \{0,1\}$ is a function defining for each pair of vertices $u,v \in V$ and each timestamp $t \in T$ whether edge $(u,v)$ exists in $t$. We denote $E = \{(u,v,t) \mid \tau(u,v,t) = 1 \}$ the set of all temporal edges. Given a timestamp $t \in T$, $E_t = \{(u,v) \mid \tau(u,v,t) = 1 \}$ is the set of edges existing at time $t$.
A temporal interval $\Delta = [t_s, t_e]$ is contained into another temporal interval $\Delta' = [t'_s, t'_e]$, denoted $\Delta \sqsubseteq \Delta'$, if $t'_s \leq t_s$ and $t'_e \geq t_e$.
Given an interval $\Delta \sqsubseteq T$, we denote $E_\Delta = \bigcap_{t \in \Delta} E_t$ the edges existing in \emph{all timestamps} of $\Delta$. Given a subset $S \subseteq V$ of vertices, let $E_{\Delta}[S] = \{(u,v) \in E_{\Delta} \mid u \in S, v \in S\}$ and $G_{\Delta}[S] = (S, E_{\Delta}[S])$.
Finally, the  \emph{temporal degree} of a vertex $u$ within $G_{\Delta}[S]$ is denoted $\tdeg_\Delta(S,u) = |\{v \in S \mid (u,v) \in E_\Delta[S] \}|$.

\begin{mydefinition}[$(k,\Delta)$-core] \label{def:core}
The  $(k,\Delta)$\emph{-core} of a temporal graph $G = (V,T,\tau)$ is (when it exists) a maximal and non-empty set of vertices $ \emptyset \neq C_{k,\Delta} \subseteq V$, such that $\forall u \in C_{k,\Delta} : \tdeg_\Delta(C_{k,\Delta},u) \geq k$, where
 $\Delta \sqsubseteq T$ is a temporal interval and $k \in \mathbb{N}^+$.
\end{mydefinition}

A $(k,\Delta)$-core is a set of vertices implicitly defining a cohesive subgraph (where $k$ represents the cohesiveness constraint), together with its \emph{temporal span}, i.e., the interval $\Delta$ for which the subgraph satisfies the cohesiveness constraint.
In the remainder of the paper we refer to this type of temporal pattern as \emph{\spancore}.

The first problem we tackle in this work is to compute the \emph{span-core decomposition} of a temporal graph $G$, i.e., all span-cores of~$G$.

\begin{problem}[Span-core decomposition] \label{pbl:dececomposition}
Given a temporal graph $G$, find the set of all $(k,\Delta)$-cores of $G$.
\end{problem}

Unlike standard cores of simple graphs, \spancores are not all nested into each other, due to their spans.
However, they still exhibit containment properties.
Indeed, it can be observed that a $(k,\Delta)$-core is contained into any other $(k',\Delta')$-core with less restrictive degree  and span conditions, i.e., $k' \leq k$, and $\Delta' \sqsubseteq \Delta$.
This property is depicted in Figure~\ref{fig:searchspace}, and formally stated in the next proposition.

\begin{myproposition}[\Spancore containment]\label{prp:conteinment}
For any two \spancores $C_{k, \Delta}$, $C_{k',\Delta'}$ of a temporal graph $G$ it holds that
$$
k' \leq k \wedge \Delta' \sqsubseteq \Delta \ \Rightarrow \  \core \subseteq C_{k',\Delta'}.
$$
\end{myproposition}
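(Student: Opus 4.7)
The plan is to chain two monotonicity observations---one about which edges persist across intervals, and one about which vertex sets satisfy a degree lower bound---and then appeal to the maximality clause in Definition~\ref{def:core}.

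First I would unfold interval containment. Since $\Delta' \sqsubseteq \Delta$ means $\Delta' \subseteq \Delta$ as a set of timestamps, $E_\Delta = \bigcap_{t \in \Delta} E_t$ is the intersection of at least as many edge-sets as $E_{\Delta'} = \bigcap_{t \in \Delta'} E_t$, so $E_\Delta \subseteq E_{\Delta'}$: every edge that persists throughout $\Delta$ also persists throughout the sub-interval $\Delta'$. Restricting to any vertex subset $S \subseteq V$ preserves the containment, $E_\Delta[S] \subseteq E_{\Delta'}[S]$, and hence yields the pointwise degree bound $\tdeg_\Delta(S,u) \leq \tdeg_{\Delta'}(S,u)$ for every $u \in S$.

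Next I would apply this bound with $S = \core$. By Definition~\ref{def:core}, every $u \in \core$ satisfies $\tdeg_\Delta(\core,u) \geq k$, so combining with the inequality above and the assumption $k' \leq k$ gives $\tdeg_{\Delta'}(\core,u) \geq k'$ for all $u \in \core$. In other words, $\core$ is itself a non-empty vertex set in which every vertex has temporal degree at least $k'$ within the $\Delta'$-snapshot.

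Finally, I would invoke the maximality of $C_{k',\Delta'}$ to conclude $\core \subseteq C_{k',\Delta'}$. The main point that requires care is precisely this last step: Definition~\ref{def:core} only asserts that $C_{k',\Delta'}$ is \emph{maximal}, not that it contains every other set meeting the degree condition. To bridge this gap I would observe that the union of two vertex sets on which every vertex has $\Delta'$-degree at least $k'$ again has this property (adding vertices and their incident edges can only increase each vertex's degree), so a unique largest such set exists, necessarily coincides with $C_{k',\Delta'}$, and therefore contains $\core$. Once that union-closure observation is in place, the rest is a clean chain of implications and the proposition follows.
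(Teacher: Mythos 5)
Your proof is correct and follows essentially the same route as the paper's: interval containment gives $E_\Delta \subseteq E_{\Delta'}$ and hence pointwise degree monotonicity, which combined with $k' \leq k$ shows every vertex of $C_{k,\Delta}$ satisfies the $(k',\Delta')$ degree condition, and maximality of $C_{k',\Delta'}$ concludes. If anything you are more explicit than the paper, which splits the hypothesis into two separate containments and treats as self-evident the point you justify via union-closure, namely that the maximal set satisfying the degree condition is unique and contains every set satisfying it.
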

\begin{proof}
The result can be proved by separating the two conditions in the hypothesis, i.e., by separately showing that ($i$) $k' \leq k \Rightarrow C_{k, \Delta} \subseteq C_{k', \Delta}$, and ($ii$) $\Delta' \sqsubseteq \Delta \Rightarrow C_{k,\Delta} \subseteq C_{k,\Delta'}$.
The first argument holds as, keeping the span $\Delta$ fixed, the maximal set of vertices $C$ for which $\tdeg_\Delta(C,u) \geq k$ is clearly contained in the maximal  set of vertices $C'$ for which $\tdeg_\Delta(C',u) \geq k'$, if $k' \leq k$. As far as the second argument, it can be noted that $\Delta' \sqsubseteq \Delta \Rightarrow E_\Delta \subseteq E_{\Delta'}$, which implies that
$\forall u \in C_{k,\Delta} : \tdeg_\Delta(C_{k,\Delta}, u) \leq \tdeg_{\Delta'}(C_{k,\Delta}, u)$.
Therefore, all vertices within $C_{k,\Delta}$ satisfy the condition to be part of $C_{k,\Delta'}$ too.
\end{proof}

\begin{myobservation}\label{observation1}
For a fixed temporal interval $\Delta \sqsubseteq T$, finding all span-cores that have $\Delta$ as their span is equivalent to computing the classic core decomposition~\cite{batagelj2011fast} of the simple graph $G_\Delta = (V, E_\Delta)$.
\end{myobservation}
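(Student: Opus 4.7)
The plan is to verify the claim by unfolding the two relevant definitions and checking that they coincide termwise. The key observation is that once the interval $\Delta$ is fixed, the ``temporal'' content of Definition~\ref{def:core} collapses entirely into the edge set $E_\Delta = \bigcap_{t \in \Delta} E_t$, which is by construction the edge set of the simple graph $G_\Delta = (V, E_\Delta)$ appearing in the statement.

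Concretely, the first step is to show that for any $S \subseteq V$ and any $u \in S$, the temporal degree $\tdeg_\Delta(S,u)$ used in Definition~\ref{def:core} is equal to the ordinary degree $d(S,u)$ of $u$ in the subgraph of $G_\Delta$ induced by $S$. This is immediate from the definition $E_\Delta[S] = \{(u,v) \in E_\Delta \mid u,v \in S\}$, which is exactly the induced edge set of $G_\Delta[S]$ in the simple-graph sense, so $\tdeg_\Delta(S,u) = |\{v \in S \mid (u,v) \in E_\Delta[S]\}| = d(S,u)$.

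Substituting this identity into Definition~\ref{def:core}, the $(k,\Delta)$-core is the maximal non-empty $C \subseteq V$ such that $d(C,u) \geq k$ for every $u \in C$, which is word-for-word Definition~\ref{def:kcores} applied to $G_\Delta$. Letting $k$ range over $\mathbb{N}^+$ on both sides, the family of span-cores with span $\Delta$ thus coincides with the classic core decomposition $C_0 \supseteq C_1 \supseteq \cdots$ of $G_\Delta$ (with the convention that a span-core is undefined exactly when the corresponding classic core is empty).

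There is essentially no technical obstacle here: the observation is purely definitional, and the only thing that needs care is the bookkeeping between the two maximality conditions and the handling of empty cores, both of which are routine. The main value of the statement is not difficulty but leverage, since it immediately lets us invoke off-the-shelf linear-time core-decomposition algorithms (e.g., \cite{batagelj2011fast}) as a black box on each fixed-$\Delta$ slice of the problem.
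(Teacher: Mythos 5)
Your proof is correct and matches the paper's treatment: the paper states this as an unproved observation precisely because, as you show, fixing $\Delta$ makes $\tdeg_\Delta(S,u)$ coincide with the ordinary degree $d(S,u)$ in $G_\Delta$, so Definition~\ref{def:core} reduces verbatim to Definition~\ref{def:kcores} applied to $G_\Delta$. Your extra bookkeeping about $k \in \mathbb{N}^+$ versus the $k=0$ core and the empty-core convention is fine and does not change the argument.
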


As the total number of temporal intervals that are contained into the whole time domain $T$ is $|T|(|T|\!+\!1)/2$, the total number of \spancores  is $\mathcal{O}(|T|^2 \times k_{max})$, where $k_{max}$ is the largest value of $k$ for which a  $(k,\Delta)$-core exists.
The number of \spancores is thus quadratic in $|T|$, which may be too large an output for human inspection.
In this regard, it may be useful to focus only on the most relevant cores, i.e., the \emph{maximal} ones, as defined next.

\begin{mydefinition}[Maximal \Spancore] \label{def:maximal}
A \spancore $\core$ of a temporal graph $G$ is said \emph{maximal} if there does not exist any other \spancore $C_{k',\Delta'}$ of $G$ such that $k \leq k'$ and $\Delta \sqsubseteq \Delta'$.
\end{mydefinition}

Hence, a \spancore is recognized as maximal if it is not dominated by another \spancore both on the order $k$ and the span $\Delta$.
Differently from the \emph{innermost core} (i.e., the core of the highest order) in the classic core decomposition, which is unique,
in our temporal setting the number of maximal \spancores is $\mathcal{O}(|T|^2)$, as, in the worst case,
there may be one maximal \spancore for every temporal interval.
However, as observed experimentally, maximal \spancores are always much less than the overall \spancores: the difference is usually one order of magnitude or more.
The second problem we tackle in this work is to compute the maximal \spancores of a temporal graph.

\begin{problem}[\maxspancoreprob] \label{pbl:maximal}
Given a temporal graph $G$, find the set of all maximal $(k,\Delta)$-cores of $G$.
\end{problem}

Clearly, one could solve Problem~\ref{pbl:maximal} by  solving Problem~\ref{pbl:dececomposition} and  filtering out all the non-maximal \spancores.
However, an interesting yet challenging question (Section~\ref{sec:algorithms:maximal}) is whether one can exploit the maximality condition to develop faster algorithms that can directly extract the maximal ones, without computing all the span-cores.

\begin{figure}[t!]
\vspace{-3mm}
\centering
\includegraphics[width=0.8\columnwidth]{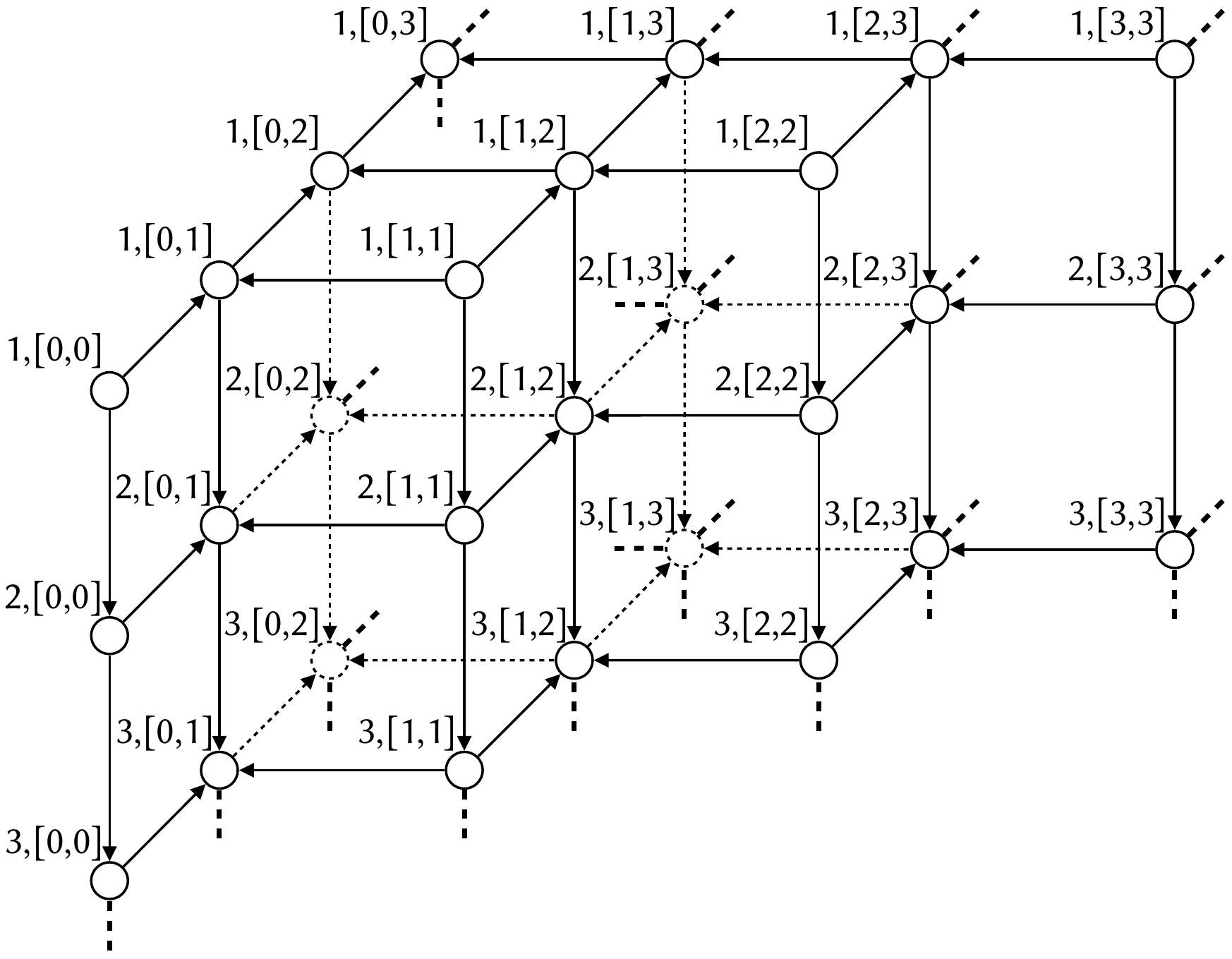}
\vspace{-2mm}
\caption{\label{fig:searchspace} {Search space: for a temporal span $\Delta = [t_s,t_e]$, the $(k,\Delta)$-core is depicted as a node labeled ``$k, [t_s, t_e]$''. An arrow  $C_1 \rightarrow C_2$ denotes $C_1 \supseteq C_2$} (distinction between solid and dotted arrows is for visualization sake only).}
\vspace{1mm}
\end{figure}

\section{Computing all Span-cores}
\label{sec:algorithms}


In this section we devise algorithms for computing a complete \spancore decomposition of a temporal graph (Problem~\ref{pbl:dececomposition}).

\spara{A na\"ive approach.} As stated in Observation 1, for a fixed temporal interval $\Delta \sqsubseteq T$, mining all span-cores $\core$ is equivalent to computing the classic core decomposition of the graph $G_\Delta = (V, E_\Delta)$.
A na\"{\i}ve strategy is thus to run a core-decomposition subroutine~\cite{batagelj2011fast}
on graph $G_\Delta$ for each temporal interval $\Delta \sqsubseteq T$. Such a method has time complexity $\bigO(\sum_{\Delta \sqsubseteq T} (|\Delta| \times |E|))$, i.e., $\bigO(|T|^2 \times |E|)$.

\spara{A more efficient algorithm.} Looking at Figure \ref{fig:searchspace} one can observe that the  na\"ive algorithm only exploits one dimension of the containment property: it starts from each point on the top level, i.e., from cores of order $1$, and goes down vertically with the classic core decomposition. Based on Proposition~\ref{prp:conteinment}, it is possible to design a more efficient algorithm that
exploits also the ``horizontal containment'' relationships.

\begin{myexample}
Consider core  $C_{1,[0,2]}$  in Figure \ref{fig:searchspace}: by Proposition~\ref{prp:conteinment} it holds that it is a subset of both $C_{1,[0,1]}$ and $C_{1,[1,2]}$. Therefore, to compute $C_{1,[0,2]}$, instead of starting from the whole $V$, one can start from $C_{1,[0,1]} \cap C_{1,[1,2]}$. Starting from a much smaller set of vertices can provide a substantial speed-up to the whole computation.
\end{myexample}

This observation, although simple,  produces a speed-up of orders of magnitude as we will empirically show in Section~\ref{sec:experiments}.
The next straightforward corollary of Proposition~\ref{prp:conteinment} states that, not only $C_{1,[0,2]} \subseteq C_{1,[0,1]} \cap C_{1,[1,2]}$, but this is the best one can get, meaning that
intersecting these two span-cores is equivalent to intersecting all span-cores structurally containing $C_{1,[0,2]}$.

\begin{mycorollary}\label{cor:cor1ofProp1}
Given a temporal graph $G = (V, T, \tau)$, and a temporal interval $\Delta = [t_s,t_e] \sqsubseteq T$, let $\Delta_+ = [\min\{t_s+1,t_e\}, t_e]$ and $\Delta_- = [t_s, \max\{t_e-1,t_s\}]$.
It holds that $$C_{1,\Delta} \ \subseteq \ (C_{1,\Delta_+} \cap C_{1,\Delta_-}) \ = \ \bigcap_{\Delta' \sqsubseteq \Delta} C_{1, \Delta'} .$$
\end{mycorollary}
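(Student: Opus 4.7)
The plan is to treat the chain as two separate claims --- the inclusion $C_{1,\Delta} \subseteq C_{1,\Delta_+} \cap C_{1,\Delta_-}$ and the equality of that intersection with $\bigcap_{\Delta' \sqsubseteq \Delta} C_{1,\Delta'}$ --- and to derive both directly from Proposition~\ref{prp:conteinment} via an elementary case analysis on the endpoints of the generic sub-interval. There is no deep obstacle: the whole argument rests on the observation that any sub-interval of $\Delta$ that properly shrinks an endpoint is itself a sub-interval of either $\Delta_+$ or $\Delta_-$, so Proposition~\ref{prp:conteinment} immediately sandwiches everything.

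For the left inclusion I would simply note that, by construction, each of $\Delta_+$ and $\Delta_-$ is obtained from $\Delta$ by shrinking a single endpoint, so $\Delta_+ \sqsubseteq \Delta$ and $\Delta_- \sqsubseteq \Delta$. Proposition~\ref{prp:conteinment} with $k = k' = 1$ then gives $C_{1,\Delta} \subseteq C_{1,\Delta_+}$ and $C_{1,\Delta} \subseteq C_{1,\Delta_-}$, and intersecting the two yields the claim.

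For the equality I would prove both inclusions. The direction $\bigcap_{\Delta' \sqsubseteq \Delta} C_{1,\Delta'} \subseteq C_{1,\Delta_+} \cap C_{1,\Delta_-}$ is immediate, since $\Delta_+$ and $\Delta_-$ themselves appear as indices of the left-hand intersection. For the reverse direction, it suffices to show, for every $\Delta' = [t'_s,t'_e] \sqsubseteq \Delta$, that $C_{1,\Delta_+} \cap C_{1,\Delta_-} \subseteq C_{1,\Delta'}$. Using $t_s \leq t'_s \leq t'_e \leq t_e$, whenever $\Delta' \neq \Delta$ at least one of $t'_s > t_s$ or $t'_e < t_e$ must hold: the former gives $t'_s \geq \min\{t_s+1,t_e\}$, hence $\Delta' \sqsubseteq \Delta_+$, so Proposition~\ref{prp:conteinment} forces $C_{1,\Delta_+} \subseteq C_{1,\Delta'}$; symmetrically, the latter gives $\Delta' \sqsubseteq \Delta_-$ and $C_{1,\Delta_-} \subseteq C_{1,\Delta'}$. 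In either case the intersection on the left lies inside $C_{1,\Delta'}$. The only mildly delicate point is the boundary $\Delta' = \Delta$, which is absorbed by the inclusion already proved and therefore adds no tighter restriction to the right-hand intersection; the equality then follows.
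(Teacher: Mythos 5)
Your argument is the intended one: the paper states this corollary without proof, presenting it as a straightforward consequence of Proposition~\ref{prp:conteinment}, and your two-part derivation --- the left inclusion from $\Delta_+ \sqsubseteq \Delta$ and $\Delta_- \sqsubseteq \Delta$, and the equality by showing that every subinterval $\Delta' = [t'_s,t'_e] \sqsubseteq \Delta$ with $\Delta' \neq \Delta$ satisfies $\Delta' \sqsubseteq \Delta_+$ (when $t'_s > t_s$) or $\Delta' \sqsubseteq \Delta_-$ (when $t'_e < t_e$) --- is exactly the right fleshing-out, and the endpoint case analysis is correct.

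The one place where your reasoning does not hold up is the boundary case $\Delta' = \Delta$. You say it is ``absorbed by the inclusion already proved and therefore adds no tighter restriction,'' but the inclusion you proved points the wrong way for that purpose: if $\Delta$ itself is admitted into the index set, then, since $C_{1,\Delta} \subseteq C_{1,\Delta'}$ for every $\Delta' \sqsubseteq \Delta$, the right-hand intersection collapses to exactly $C_{1,\Delta}$, and the claimed equality would force $C_{1,\Delta_+} \cap C_{1,\Delta_-} = C_{1,\Delta}$. That is false in general: with $\Delta = [0,1]$, $E_0 = \{(a,b),(c,d)\}$ and $E_1 = \{(a,b),(c,e)\}$, one gets $C_{1,\Delta_-} \cap C_{1,\Delta_+} = \{a,b,c\}$ while $C_{1,\Delta} = \{a,b\}$; indeed the possibility of strict inclusion is the whole algorithmic point of the corollary. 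The correct resolution is not absorption but interpretation: the intersection on the right is meant to range over the subintervals $\Delta' \neq \Delta$ (this is what the paper's surrounding example with $C_{1,[0,2]}$ and its five proper subintervals makes explicit, and is admittedly an imprecision in the statement itself), and under that reading your case analysis already covers every index, so nothing further is needed. (The degenerate case $t_s = t_e$, for which the $\min$/$\max$ guards in $\Delta_\pm$ exist, is the one situation where keeping $\Delta' = \Delta$ in the range is what makes the equality trivially true.) State that restriction explicitly rather than waving the boundary case through.
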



\begin{myexample}
Consider again $C_{1,[0,2]}$  in Figure \ref{fig:searchspace}:  Proposition~\ref{prp:conteinment} states that it is a subset of $C_{1,[0,0]}, C_{1,[0,1]},C_{1,[1,1]},C_{1,[1,2]},C_{1,[2,2]}$.
Corollary~\ref{cor:cor1ofProp1} suggests that there is no need to intersect them all, but only $C_{1,[0,1]}$ and $C_{1,[1,2]}$: in fact, $C_{1,[0,1]} \subseteq C_{1,[0,0]} \cap C_{1,[1,1]}$ and
$C_{1,[1,2]} \subseteq C_{1,[1,1]} \cap C_{1,[2,2]}$.
\end{myexample}


\begin{algorithm}[t]
\DontPrintSemicolon
\small
\KwIn{A temporal graph $G=(V,T,\tau)$.}
\KwOut{The set $\coresset$ of all \spancores of $G$.}
$\coresset \leftarrow \emptyset$; \ \ $Q \leftarrow \emptyset$; \ \ $\mathcal{A} \leftarrow \emptyset$\;
\ForAll{$t \in T$}
{\label{line:decomposition:init:start}
	enqueue $[t,t]$ to $Q$; \ \ $\mathcal{A}[t,t] \leftarrow V$\;\label{line:decomposition:init:end}
}
\While{$Q \neq \emptyset$}
{\label{line:decomposition:whilestart}
	dequeue $\Delta = [t_s, t_e]$ from $Q$\;
	$E_{\Delta}[\mathcal{A}[\Delta]] \leftarrow  \{(u,v) \in E_{\Delta} \mid u \in \mathcal{A}[\Delta], v \in \mathcal{A}[\Delta]\}$\;\label{line:decomposition:subgraph}

	\If{$|E_{\Delta}[\mathcal{A}[\Delta]]| > 0$}
	{
		$\coressetdelta \leftarrow $ \textsf{core-decomposition}$(\mathcal{A}[\Delta],E_{\Delta}[\mathcal{A}[\Delta]])$\;\label{line:decomposition:deltacores}				
		$\coresset \leftarrow \coresset \cup \coressetdelta$\; \label{line:decomposition:solution}
		$\Delta_1 = [\max\{t_s-1,0\}, t_e]$; \ \  $\Delta_2 = [t_s, \min\{t_e+1,t_{max}\}]$\; \label{line:decomposition:fathers}	 
		\ForAll{$\Delta' \in \{\Delta_1, \Delta_2\} \mid \Delta' \neq \Delta$}
		{
			\uIf{$\mathcal{A}[\Delta'] \neq \textsc{null}$} 
			{\label{line:decomposition:alreadyinit}	
				$\mathcal{A}[\Delta'] \leftarrow \mathcal{A}[\Delta'] \cap C_{1,\Delta}$\; \label{line:decomposition:second}	
				enqueue $\Delta'$ to $Q$\; \label{line:decomposition:enqueue}
			}	
			\Else
			{
				$\mathcal{A}[\Delta'] \leftarrow C_{1,\Delta}$\;\label{line:decomposition:first}
			}
		}
	}
}
\caption{\cores}\label{alg:decomposition}
\end{algorithm}

The main idea behind our efficient \cores\ algorithm (whose pseudocode is given as Algorithm~\ref{alg:decomposition}) is to generate temporal intervals of increasing size (starting from size one) and, for each $\Delta$ of width larger than one, to start the core decomposition from $(C_{1,\Delta_+} \cap C_{1,\Delta_-})$, i.e., the smallest intersection of cores containing $C_{1,\Delta}$ (Corollary~\ref{cor:cor1ofProp1}).
The intervals to be processed are added to queue $Q$, which is initialized with the intervals of size one (Lines~\ref{line:decomposition:init:start}--\ref{line:decomposition:init:end}): these are the only intervals for which no other interval can be used to reduce the set of vertices from which start the core decomposition, thus it has to be initialized with the whole vertex set $V$.
The algorithm utilizes a map $\mathcal{A}$ that, given an interval $\Delta$, returns the set of vertices to be used as a starting set of the core decomposition on $\Delta$.
The algorithm processes all intervals stored in $Q$, until $Q$ has become empty (Lines~\ref{line:decomposition:whilestart}--\ref{line:decomposition:first}).
For every temporal interval $\Delta$ extracted from $Q$, the starting set of vertices is retrieved from $\mathcal{A}[\Delta]$  and the corresponding set of edges is identified (Line~\ref{line:decomposition:subgraph}). Unless this is empty, the  classic core-decomposition algorithm~\cite{batagelj2011fast} is invoked over $(\mathcal{A}[\Delta],E_{\Delta}[\mathcal{A}[\Delta]])$ (Line~\ref{line:decomposition:deltacores}) and its output (a set of span-cores of span $\Delta$) is added to the ultimate output set $\coresset$ (Line~\ref{line:decomposition:solution}).

Afterwards, the two intervals, denoted $\Delta_1$ and $\Delta_2$, for which $C_{1,\Delta}$ can be used to obtain the smallest intersections of cores containing them (Corollary~\ref{cor:cor1ofProp1}) are computed at Line~\ref{line:decomposition:fathers}.
For $\Delta_1$ (and analogously $\Delta_2$), we check whether $\mathcal{A}[\Delta_1]$ has already been initialized (Line~\ref{line:decomposition:alreadyinit}): this would mean that previously the other ``father'' (i.e., smallest containing core) of $C_{1,\Delta_1}$ has been computed, thus we can intersect $C_{1,\Delta}$ with $\mathcal{A}[\Delta_1]$ and enqueue $\Delta_1$ to be processed (Lines~\ref{line:decomposition:second}--\ref{line:decomposition:enqueue}). Instead, if $\mathcal{A}[\Delta_1]$ was not yet initialized, we initialize it with $C_{1,\Delta}$ (Line~\ref{line:decomposition:first}): in this case $\Delta_1$ is not enqueued because it still misses one father to be intersected before being ready for core decomposition.
This procedural update of $Q$ ensures that both fathers of every interval in $Q$ exist and have been previously computed, thus no a-posteriori verification is needed.

\begin{myexample}
Consider again the search space in Figure~\ref{fig:searchspace}.
Algorithm~\ref{alg:decomposition} first processes the intervals $[0,0],[1,1],[2,2],$ and $[3,3]$.
Then, it intersects  $C_{1,[0,0]}$ and $C_{1,[1,1]}$ to initialize $C_{1,[0,1]}$, intersects $C_{1,[1,1]}$ and $C_{1,[2,2]}$ to initialize $C_{1,[1,2]}$, and  intersects $C_{1,[2,2]}$ and $C_{1,[3,3]}$ to initialize $C_{1,[2,3]}$. Then, it continues with the intervals of size 3: it intersects $C_{1,[0,1]}$ and $C_{1,[1,2]}$ to initialize $C_{1,[0,2]}$ and so on.
\end{myexample}
The next theorem formally shows soundness and completeness of our \cores\ algorithm.
\begin{mytheorem}\label{th:correctnessAlg2}
Algorithm~\ref{alg:decomposition} is sound and complete for Problem~\ref{pbl:dececomposition}.
\end{mytheorem}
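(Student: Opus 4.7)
The plan is to establish soundness and completeness jointly by structural induction on the length of the temporal interval $\Delta \sqsubseteq T$, showing that for every $\Delta$ with $E_\Delta \neq \emptyset$, the algorithm processes $\Delta$ exactly once, and at that moment the starting vertex set $\mathcal{A}[\Delta]$ is a superset of $C_{1,\Delta}$ that differs from it only by vertices isolated in $G_\Delta = (V, E_\Delta)$. From this invariant the theorem follows, because by Observation~1 the set of $(k,\Delta)$-cores coincides with the classic core decomposition of $G_\Delta$, and removing isolated vertices does not change any $k$-core for $k \geq 1$.

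\emph{Base case} ($|\Delta|=1$): Lines~2--3 enqueue every singleton interval $[t,t]$ with $\mathcal{A}[t,t] = V \supseteq C_{1,[t,t]}$, so the \textsf{core-decomposition} call at Line~\ref{line:decomposition:deltacores} computes all $(k,[t,t])$-cores correctly.

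\emph{Inductive step} ($|\Delta|>1$): Write $\Delta=[t_s,t_e]$ and consider its two immediate sub-intervals $\Delta_+=[t_s+1,t_e]$ and $\Delta_-=[t_s,t_e-1]$, both of length $|\Delta|-1$. Since $E_\Delta \subseteq E_{\Delta_+}\cap E_{\Delta_-}$, the assumption $E_\Delta\neq\emptyset$ implies both $E_{\Delta_\pm}\neq\emptyset$, so by the induction hypothesis $\Delta_+$ and $\Delta_-$ are each processed exactly once and the corresponding $C_{1,\Delta_+}, C_{1,\Delta_-}$ are emitted. Observe that $\Delta$ is precisely the $\Delta_1$-extension of $\Delta_+$ and the $\Delta_2$-extension of $\Delta_-$ defined at Line~\ref{line:decomposition:fathers}. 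Hence the first of $\Delta_+, \Delta_-$ to be dequeued falls into the \textbf{else} branch at Line~\ref{line:decomposition:first}, initialising $\mathcal{A}[\Delta]$ to one of the two 1-cores without enqueuing; the second finds $\mathcal{A}[\Delta]$ non-null, intersects the two 1-cores at Line~\ref{line:decomposition:second}, and enqueues $\Delta$ exactly once. By Corollary~\ref{cor:cor1ofProp1} we obtain $\mathcal{A}[\Delta]=C_{1,\Delta_+}\cap C_{1,\Delta_-}\supseteq C_{1,\Delta}$, as required.

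To conclude, when $\Delta$ is eventually dequeued, the subroutine at Line~\ref{line:decomposition:deltacores} is invoked on $(\mathcal{A}[\Delta], E_\Delta[\mathcal{A}[\Delta]])$. Every vertex in $\mathcal{A}[\Delta]\setminus C_{1,\Delta}$ is by definition isolated in $G_\Delta$, so $E_\Delta[\mathcal{A}[\Delta]] = E_\Delta$, and the peeling-based core decomposition produces exactly the $(k,\Delta)$-cores for all $k \geq 1$; these are added to $\coresset$ at Line~\ref{line:decomposition:solution}. Conversely, intervals $\Delta$ with $E_\Delta=\emptyset$ admit no span-cores, so it is harmless that the algorithm may either skip them at the guard in Line~7 or never enqueue them at all. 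The main obstacle is the careful bookkeeping of the interleaved initialise/intersect pattern at Lines~\ref{line:decomposition:alreadyinit}--\ref{line:decomposition:first} — in particular handling the boundary cases ($t_s=0$ or $t_e=t_{max}$, where one of $\Delta_1,\Delta_2$ collapses to $\Delta$ and is filtered out by the condition $\Delta'\neq\Delta$) and verifying that a $\Delta$ is never enqueued before both of its ``fathers'' have been computed; the argument above encapsulates this in the invariant that enqueuing coincides with the second of the two write-events to $\mathcal{A}[\Delta]$.
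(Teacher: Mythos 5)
Your proof is correct and follows essentially the same route as the paper's: it establishes that when an interval $\Delta$ is dequeued, $\mathcal{A}[\Delta]=C_{1,\Delta_+}\cap C_{1,\Delta_-}\supseteq C_{1,\Delta}$ (Corollary~\ref{cor:cor1ofProp1}), invokes Observation~\ref{observation1} for soundness of the per-interval core decomposition, and argues completeness by noting that skipped intervals have $E_\Delta=\emptyset$ and hence no \spancores. Your explicit induction on $|\Delta|$ and the remark that the vertices in $\mathcal{A}[\Delta]\setminus C_{1,\Delta}$ are isolated (so the computed cores coincide with those of $G_\Delta$ for $k\geq 1$) merely make precise details the paper leaves implicit.
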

\begin{proof}
The algorithm generates and processes a subset of temporal intervals $\mathcal{X} \subseteq \{\Delta \mid \Delta \sqsubseteq T\}$.
For every interval $\Delta \subseteq \mathcal{X}$, it computes \emph{all} \spancores $\mathbf{C}_{\Delta} = \{C_{1,\Delta}, C_{2,\Delta}, \ldots, C_{k_{\Delta},\Delta}\}$ defined on $\Delta$ by means of the \textsf{core-decomposition} subroutine on the graph $(\mathcal{A}[\Delta],E_{\Delta}[\mathcal{A}[\Delta]])$.
The set of vertices $\mathcal{A}[\Delta]$ is equivalent to $(C_{1,\Delta_+} \cap C_{1,\Delta_-})$ because of Line~\ref{line:decomposition:second} (Corollary~\ref{cor:cor1ofProp1}) and the fact that $\Delta$ is enqueued (Line~\ref{line:decomposition:enqueue}) only when both fathers have been processed and the intersection done. The correctness of doing the classic core decomposition is guaranteed by Observation ~\ref{observation1}.

As for completeness, it suffices to show that the intervals $\Delta \notin \mathcal{X}$ that have not been processed by the algorithm do not yield any \spancore.
The algorithm generates all temporal intervals size by size, starting from those of size one and then going to larger sizes. This is done by maintaining the queue $Q$. As said above, an interval $\Delta$ is enqueued as soon as both   $C_{1,\Delta_+}$ and $C_{1,\Delta_-}$ have been processed. Thus, an interval $\Delta$ is not in $\mathcal{X}$ only if either $C_{1,\Delta_+}$ or $C_{1,\Delta_-}$ does not exist. In this case $C_{1,\Delta}$ and all other $C_{k,\Delta}$ do not exist as well.
\end{proof}

\noindent \textbf{Discussion.}
Algorithm~\ref{alg:decomposition} exploits the ``horizontal containment'' relationships only at the first level of the search space.
For a given $\Delta$, once the restricted starting set of vertices has been defined for $k = 1$, the traditional core decomposition is started to produce all the span-cores of span $\Delta$.
In other words, for $k > 1$ only the ``vertical containment'' is exploited.
Consider the span-core $C_{3,[1,2]}$ in Figure \ref{fig:searchspace}: we know that it is a subset of  $C_{2,[1,2]}$ (``vertical'' ) and of $C_{3,[1,1]}$ and $C_{3,[2,2]}$  (``horizontal'' ).
One could consider intersecting all these three span-cores before computing $C_{3,[1,2]}$.
We tested this alternative approach, but concluded that the overhead of computing intersections and data-structure maintenance was outweighing the benefit of starting from a smaller vertex set.


The worst-case time complexity of Algorithm~\ref{alg:decomposition} is equal to the na\"{\i}ve approach, however in practice it is orders of magnitude faster, as shown in Section~\ref{sec:experiments}.

\section{Computing Maximal Span-cores}
\label{sec:algorithms:maximal}

In this section we focus on Problem~\ref{pbl:maximal}: computing the \emph{maximal} \spancores of a temporal graph.

\spara{A filtering approach.}
As anticipated above, a straightforward way of solving this problem consists in filtering the \spancores computed during the execution of Algorithm~\ref{alg:decomposition}, so as to ultimately output only the maximal ones.
This can easily be accomplished by equipping Algorithm~\ref{alg:decomposition} with a data structure $\mathcal{M}$ that stores the \spancore of the highest order for every temporal interval $\Delta \sqsubseteq T$ that has been processed by the algorithm.
Moreover, at the storage of a \spancore $C_{k,\Delta}$ in $\mathcal{M}$, the \spancores previously stored in $\mathcal{M}$ for subintervals of the temporal interval $\Delta$ and with the same order $k$ are removed from $\mathcal{M}$.
This removal operation, together with the order in which \spancores are processed, ensures that $\mathcal{M}$ eventually contains only the maximal \spancores.

%

\spara{Efficient maximal-span-core finding.}
Our next goal is to design a more efficient algorithm that extracts maximal \spancores directly, without computing complete core decompositions,
passing over more peripheral ones, and without generating all temporal cores.
This is a quite challenging design principle, as it contrasts the intrinsic structural properties of core decomposition, based on which a core of order $k$ is usually computed from the core of order $k\!-\!1$, thus making the computation of the core of the highest order as hard as computing the overall decomposition.
Nevertheless, thanks to theoretical properties that relate the maximal \spancores to each other, in the temporal context such a challenge can be achieved.
In the following we discuss such properties in detail, by starting from a result that has already been discussed above, but only informally.

Consider the classic core decomposition in a standard (\mbox{non-temporal}) graph $G$ (Definition~\ref{def:kcores}) and
let $C_{k^*}[G]$ denote the \emph{innermost} core of $G$, i.e., the non-empty $k$-core of $G$ with the largest~$k$.

\begin{mylemma}\label{lemma1}
Given a temporal graph $G = (V,T,\tau)$, let $\imcores$ be the set of all maximal \spancores of $G$, and $\mathbf{C_{inner}} = \{ C_{k^*}[G_\Delta]  \mid \Delta \sqsubseteq T\}$ be the set of innermost cores of all graphs $G_\Delta$. It holds that $\imcores \subseteq \mathbf{C_{inner}}$.
\end{mylemma}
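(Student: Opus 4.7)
\medskip

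The plan is to prove the containment by contradiction, leveraging Observation~\ref{observation1} to reduce the statement to a fact about classic (non-temporal) core decompositions.

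First, I would fix an arbitrary $C_{k,\Delta} \in \mathbf{C}_M$ and consider the simple graph $G_\Delta = (V, E_\Delta)$. By Observation~\ref{observation1}, the family of span-cores of $G$ whose span equals $\Delta$ is exactly the classic core decomposition of $G_\Delta$; in particular, the innermost core $C_{k^*}[G_\Delta]$ coincides with $C_{k^\star,\Delta}$, where $k^\star = \max\{k \in \mathbb{N}^+ \mid C_{k,\Delta} \neq \emptyset\}$. So the lemma reduces to showing that every maximal span-core must achieve this maximal order on its own span, i.e.\ $k = k^\star$.

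Next, I would suppose toward a contradiction that $k < k^\star$. Then $C_{k^\star,\Delta}$ is a well-defined, non-empty span-core of $G$, and it is distinct from $C_{k,\Delta}$ (since their orders differ). By construction we have $k < k^\star$ and trivially $\Delta \sqsubseteq \Delta$, so $C_{k,\Delta}$ is dominated by another span-core in the sense of Definition~\ref{def:maximal}, contradicting the assumed maximality of $C_{k,\Delta}$. Therefore $k = k^\star$, which is precisely the statement that $C_{k,\Delta} = C_{k^*}[G_\Delta] \in \mathbf{C_{inner}}$, and since $C_{k,\Delta}$ was arbitrary, $\mathbf{C}_M \subseteq \mathbf{C_{inner}}$.

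I do not anticipate a real obstacle here: the entire argument rests on unfolding the definitions and invoking Observation~\ref{observation1}. The only subtle point to be careful about is reading Definition~\ref{def:maximal} correctly (the word ``other'' excludes $(k',\Delta') = (k,\Delta)$, so a strict inequality in at least one coordinate is required), which is exactly what makes the contradiction go through.
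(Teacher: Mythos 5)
Your proof is correct and follows essentially the same route as the paper: both argue that a maximal span-core $C_{k,\Delta}$ must be the innermost core of $G_\Delta$, since otherwise a non-empty core $C_{k',\Delta}$ with $k' > k$ on the same span would dominate it, contradicting Definition~\ref{def:maximal}. Your additional care in invoking Observation~\ref{observation1} and in handling the word ``other'' only makes explicit what the paper's one-line proof leaves implicit.
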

\begin{proof}
Every $C_{k,\Delta} \in \imcores$ is the innermost core of the non-temporal graph $G_\Delta$: else,
there would exist another core $C_{k',\Delta} \neq \emptyset$ with $k' > k$, implying that $C_{k,\Delta} \notin \imcores$.
\end{proof}

Lemma~\ref{lemma1} states that each maximal \spancore is an innermost core of a $G_\Delta$, for some temporal interval $\Delta \sqsubseteq T$.
Hence, there can exist at most one maximal \spancore for every $\Delta \sqsubseteq T$ (while an interval $\Delta$ may not yield any maximal \spancore).
The key question to design an efficient maximal-\spancore-mining algorithm thus becomes how to extract innermost cores of
the graphs $G_\Delta$ more efficiently than by computing the full core decompositions of all $G_\Delta$.
The answer to this question comes from the result stated in the next two lemmas (with Lemma~\ref{lemma2} being auxiliary to Lemma~\ref{lemma3}).

\begin{mylemma}\label{lemma2}
Given a temporal graph $G = (V,T,\tau)$, and three temporal intervals $\Delta = [t_s,t_e] \sqsubseteq T$, $\Delta' = [t_s\!-\!1,t_e] \sqsubseteq T$, and $\Delta'' = [t_s,t_e\!+\!1] \sqsubseteq T$.
The innermost core $C_{k^*}[G_\Delta]$ is a maximal span-core of $G$ if and only if $k^* > \max\{k',k''\}$ where $k'$ and $k''$ are the orders of the innermost cores of $G_{\Delta'}$ and $G_{\Delta''}$, respectively.
\end{mylemma}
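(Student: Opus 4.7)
The plan is to prove both directions by contradiction, using Proposition~1 (span-core containment) as the main engine.

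\textbf{Forward direction ($\Rightarrow$).} Suppose $C_{k^*}[G_\Delta]$ is a maximal span-core, and suppose for contradiction that $k' \geq k^*$ (the argument for $k''$ is symmetric). Then the innermost core of $G_{\Delta'}$ is the span-core $C_{k', \Delta'}$, which is non-empty by assumption, satisfies $k^* \leq k'$, and has $\Delta \sqsubseteq \Delta'$ (since $\Delta' = [t_s{-}1, t_e]$). Because $\Delta' \neq \Delta$, this span-core is different from $C_{k^*, \Delta}$ and dominates it in the sense of Definition~3, contradicting maximality.

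\textbf{Reverse direction ($\Leftarrow$).} Suppose $k^* > \max\{k', k''\}$ and, for contradiction, that $C_{k^*, \Delta}$ is not maximal. Then there exists a span-core $C_{k, \hat\Delta} \neq C_{k^*, \Delta}$ with $k \geq k^*$ and $\Delta \sqsubseteq \hat\Delta$. If $\hat\Delta = \Delta$, then $k > k^*$, contradicting that $k^*$ is the order of the innermost core of $G_\Delta$. Otherwise $\Delta \sqsubsetneq \hat\Delta$, meaning $\hat\Delta = [\hat t_s, \hat t_e]$ satisfies $\hat t_s \leq t_s$, $\hat t_e \geq t_e$, with at least one inequality strict.

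\textbf{Case A: $\hat t_s \leq t_s - 1$.} Then $\Delta' = [t_s{-}1, t_e] \sqsubseteq \hat\Delta$. By Proposition~1, $C_{k, \hat\Delta} \subseteq C_{k, \Delta'}$, so $C_{k, \Delta'}$ exists and is non-empty with $k \geq k^* > k'$, contradicting the fact that $k'$ is the order of the innermost core of $G_{\Delta'}$. \textbf{Case B: $\hat t_e \geq t_e + 1$.} By the symmetric argument with $\Delta''$, we obtain $k > k''$, again a contradiction. Since $\hat\Delta \sqsupsetneq \Delta$ falls into at least one of these two cases, we conclude that $C_{k^*, \Delta}$ must be maximal.

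The main conceptual point — and the only nontrivial step — is realizing that any strict span extension $\hat\Delta \sqsupsetneq \Delta$ must contain either $\Delta'$ or $\Delta''$ as a subinterval, so the ``one-step'' checks on $\Delta'$ and $\Delta''$ suffice to rule out \emph{all} dominating span-cores via the transitivity of span-core containment. The rest is bookkeeping on the endpoints of the intervals.
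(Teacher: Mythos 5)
Your proof is correct and follows essentially the same route as the paper: the forward direction is the contrapositive of the paper's appeal to Definition~\ref{def:maximal} (a $\Delta'$ or $\Delta''$ core of order $\geq k^*$ would dominate), and the reverse direction uses Proposition~\ref{prp:conteinment} together with the innermost orders $k',k''$ to bound the order of any span-core on a strict superinterval, which is exactly the paper's argument with the case analysis ($\hat\Delta$ must contain $\Delta'$ or $\Delta''$) spelled out explicitly rather than left implicit.
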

\begin{proof}
The ``$\Rightarrow$'' part comes directly from the definition of maximal \spancore (Definition~\ref{def:maximal}): if $k^*$ were
not  larger than $\max\{k',k''\}$, then $C_{k^*}[G_\Delta]$ would be dominated by another \spancore both on the order and on the span (as both $\Delta'$ and $\Delta''$ are superintervals of $\Delta$).
For the ``$\Leftarrow$'' part, from Lemma~\ref{lemma1} and Proposition~\ref{prp:conteinment} it follows that $\max\{k',k''\}$ is an upper bound on the maximum order of a \spancore of a superinterval of $\Delta$.
Therefore, $k^* > \max\{k',k''\}$ implies that there cannot exist any other \spancore that dominates $C_{k^*}[G_\Delta]$ both on the order and on the span.
\end{proof}

\begin{mylemma}\label{lemma3}
Given $G$, $\Delta$,  $\Delta'$,  $\Delta''$,  $k'$, and $k''$ defined as in Lemma~\ref{lemma2}, let $\widetilde{V} = \{u \in V \mid \tdeg_{\Delta}(V, u) > \max\{k', k''\}\}$,
and let $C_{k^*}[G_\Delta[\widetilde{V}]]$  be the innermost core of $G_\Delta[\widetilde{V}]$.
If $k^* >  \max\{k', k''\}$, then $C_{k^*}[G_\Delta[\widetilde{V}]]$ is a maximal \spancore; otherwise, no maximal \spancore exists for $\Delta$.
\end{mylemma}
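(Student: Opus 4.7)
The plan is to derive the lemma as a direct consequence of Lemma~\ref{lemma2}. Lemma~\ref{lemma2} tells us that $\Delta$ yields a maximal \spancore if and only if the innermost core of $G_\Delta$ has order strictly greater than $\max\{k',k''\}$, and in that case this innermost core is exactly the maximal \spancore sought. So it suffices to show that pruning $V$ down to $\widetilde V$ preserves every core of $G_\Delta$ whose order exceeds the threshold $\max\{k',k''\}$, so that the innermost cores of $G_\Delta$ and of $G_\Delta[\widetilde V]$ agree whenever their common order lies above the threshold.

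The central claim I would establish is: for every integer $k > \max\{k',k''\}$, the $k$-core of $G_\Delta$ equals the $k$-core of $G_\Delta[\widetilde V]$. For the ``$\subseteq$'' direction, if $u \in C_k[G_\Delta]$ then $\tdeg_\Delta(V,u) \geq \tdeg_\Delta(C_k[G_\Delta],u) \geq k > \max\{k',k''\}$, so $u \in \widetilde V$; hence $C_k[G_\Delta] \subseteq \widetilde V$, and the subgraph induced by $C_k[G_\Delta]$ inside $G_\Delta$ coincides with the one induced inside $G_\Delta[\widetilde V]$. Thus $C_k[G_\Delta]$ is a set of vertices of $G_\Delta[\widetilde V]$ satisfying the minimum-degree-$k$ property, and by maximality $C_k[G_\Delta] \subseteq C_k[G_\Delta[\widetilde V]]$. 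The ``$\supseteq$'' direction is immediate: any subset of $\widetilde V \subseteq V$ satisfying the minimum-degree-$k$ condition in $G_\Delta[\widetilde V]$ satisfies it in $G_\Delta$ as well, since the induced subgraphs are identical, and therefore lies in $C_k[G_\Delta]$ by maximality.

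With this claim in hand, I would conclude by case analysis on $k^*$. If $k^* > \max\{k',k''\}$, the claim gives $C_{k^*}[G_\Delta[\widetilde V]] = C_{k^*}[G_\Delta]$; moreover $G_\Delta$ cannot admit a core of order higher than $k^*$, because any such core would also exist in $G_\Delta[\widetilde V]$ (again by the claim), contradicting the innermost order there. Hence $C_{k^*}[G_\Delta[\widetilde V]]$ is the innermost core of $G_\Delta$ and has order $> \max\{k',k''\}$, so by Lemma~\ref{lemma2} it is a maximal \spancore. If instead $k^* \leq \max\{k',k''\}$, the claim rules out any core of $G_\Delta$ of order strictly above $\max\{k',k''\}$ (otherwise it would survive in $G_\Delta[\widetilde V]$ and contradict the maximality of $k^*$ there), so Lemma~\ref{lemma2} forbids any maximal \spancore for $\Delta$.

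The delicate step is the ``$\subseteq$'' inclusion in the central claim: it relies on the strict inequality $k > \max\{k',k''\}$ combined with the pointwise bound $\tdeg_\Delta(C_k[G_\Delta],u) \geq k$ to guarantee that every vertex of $C_k[G_\Delta]$ passes the $\widetilde V$ filter. The remaining bookkeeping is routine, because induced subgraphs of $G_\Delta$ on subsets of $\widetilde V$ are unchanged by the pruning, and Lemma~\ref{lemma2} supplies the bridge back to the definition of maximal \spancore.
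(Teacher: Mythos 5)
Your proof is correct and follows essentially the same route as the paper: both hinge on Lemma~\ref{lemma2} together with the observation that the filter $\widetilde{V}$ only discards vertices whose degree in $G_\Delta$ is too low to belong to any core of order above $\max\{k',k''\}$. Your explicit claim that the $k$-cores of $G_\Delta$ and of $G_\Delta[\widetilde{V}]$ coincide for every $k > \max\{k',k''\}$ is a welcome tightening, since the paper's short proof argues only the ``only if'' half and leaves this equality, which underpins the ``if'' half, implicit.
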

\begin{proof}
Lemma~\ref{lemma2} states that, to be recognized as a maximal \spancore, the innermost core of $G_{\Delta}$ should have order larger than $\max\{k', k''\}$.
This means that, if the innermost core of $G_{\Delta}$ is a maximal \spancore, all vertices $u \notin \widetilde{V}$ cannot be part of it.
Therefore, $G_{\Delta}$ yields a maximal \spancore only if the innermost core of subgraph  $G_{\Delta}[\widetilde{V}]$ has order $k^* >  \max\{k', k''\}$.
\end{proof}
Lemma~\ref{lemma3} provides the basis of our efficient method for extracting maximal \spancores.
Basically, it states that, to verify whether a certain temporal interval $\Delta = [t_s,t_e]$ yields a maximal \spancore (and, if so, compute it), there is no need to consider the whole graph $G_\Delta$, rather it suffices to start from a smaller subgraph, which is given by all vertices whose temporal degree is larger than the maximum between the orders of the innermost cores of intervals $\Delta' = [t_s\!-\!1,t_e]$ and $\Delta'' = [t_s,t_e\!+\!1]$.
This finding suggests a strategy that is opposite to the one used for computing the overall \spancore decomposition:
a \emph{top-down} strategy that processes temporal intervals starting from the larger ones.
Indeed, in addition to exploiting the result in Lemma~\ref{lemma3}, this way of exploring the temporal-interval space allows us to skip the computation of complete core decompositions of the whole ``singleton-interval'' graphs $\{G_{_{[t,t]}}\}_{t \in T}$, which may easily become a critical bottleneck, as they are the largest ones among the graphs induced by temporal intervals.

\begin{algorithm}[t]
\DontPrintSemicolon
\small
\KwIn{A temporal graph $G=(V,T,\tau)$.}
\KwOut{The set $\imcores$ of all maximal \spancores of $G$.}

$\imcores \leftarrow \emptyset$\;
$\mathcal{K}'[t] \gets 0$, $\forall t \in T$\;

\ForAll{$t_s \in [0, 1, \ldots, t_{max}]$}
{\label{line:imcores:extfor}
	$t^* \leftarrow \max\{ t_e \in [t_s, t_{max}] \mid E_{_{[t_s,t_e]}} \neq \emptyset\}$\; \label{line:imcores:t}
	$k'' \leftarrow 0$\;
	\ForAll{$t_e \in [t^*, t^*\!-\!1, \ldots, t_s]$}
	{\label{line:imcores:intfor} 
		$\Delta \leftarrow [t_s,t_e]$\; \label{line:imcores:delta}
		$lb\gets \max\{\mathcal{K}'[t_e],k''\}$\; \label{line:imcores:lb}
		$V_{lb} \leftarrow \{u \in V \mid \tdeg_{\Delta}(V,u) > lb\}$\; \label{line:imcores:V}
		$E_\Delta[V_{lb}] \gets \{(u,v) \in E_{\Delta} \mid u \in V_{lb}, v \in V_{lb}\}$\; \label{line:imcores:E}
		$C \leftarrow $ \innermost $(V_{lb}, E_{\Delta}[V_{lb}])$\; \label{line:imcores:core}
		$k^* \leftarrow $ order of $C$\; \label{line:imcores:core2}
		\If{$k^* > lb$}
		{\label{line:imcores:updatesolution}
			$\imcores \leftarrow \imcores \cup \{C\}$\; \label{line:imcores:updatesolution2}
		}
		$k'' \gets \max\{k'', k^*\}$; \ $\mathcal{K}'[t_e] \gets \max\{\mathcal{K}'[t_e], k''\}$\; \label{line:imcores:updatek}
	}
}
\caption{\innermosts}\label{alg:imcores}
\end{algorithm}

\spara{The \innermosts\ algorithm.} 
Algorithm~\ref{alg:imcores} iterates over all timestamps $t_s \in T$ in \emph{increasing order} (Line~\ref{line:imcores:extfor}), and for each $t_s$ it first finds all the maximal span-cores that have span starting in $t_s$.
This way of proceeding \emph{ensures that a span-core that is recognized as maximal will not be later dominated by another span-core}.
Indeed, an interval $[t_s,t_e]$ can never be contained in another interval $[t_s',t_e']$ with $t_s < t_s'$.
For a given $t_s$, all maximal span-cores are computed as follows.
First, the maximum timestamp $\geq t_s$ such that the corresponding edge set $E_{_{[t_s,t_e]}}$ is not empty is identified as $t^*$ (Line~\ref{line:imcores:t}).
Then, all intervals $\Delta = [t_s, t_e]$ are considered one by one in \emph{decreasing order} of $t_e$ (Lines~\ref{line:imcores:intfor}--\ref{line:imcores:delta}): this again \emph{guarantees that a span-core that is recognized as maximal will not be later dominated by another span-core, as the intervals are processed from the largest to the smallest}.
At each iteration of the internal cycle, the algorithm resorts to Lemma~\ref{lemma3} and computes the lower bound $lb$ on the order of the innermost core of $G_{\Delta}$ to be recognized as maximal, by taking the maximum between $\mathcal{K}'[t_e]$ and $k''$ (Line~\ref{line:imcores:lb}).
$\mathcal{K}'$ is a map that maintains, for every timestamp $t \in [t_s, t^*]$, the order of the innermost core of graph $G_{\Delta'}$, where $\Delta' = [t_s\!-\!1, t]$ (i.e., $\mathcal{K}'[t]$ stores what in Lemmas~\ref{lemma2}--\ref{lemma3} is denoted as $k'$).
Whereas $k''$ stores the order of the innermost core of $G_{\Delta''}$, where $\Delta'' = [t_s, t_e+1]$.
Afterwards, the sets of vertices $V_{lb}$ and of edges $E_{\Delta}[V_{lb}]$ that comply with this lower-bound constraint are built (Lines~\ref{line:imcores:V}--\ref{line:imcores:E}), and the innermost core of the subgraph $(V_{lb}, E_{\Delta}[V_{lb}])$ is extracted (Lines~\ref{line:imcores:core}--\ref{line:imcores:core2}).
Ultimately, based again on Lemma~\ref{lemma3}, such a core is added to the output set of maximal \spancores only if its order is actually larger than $lb$ (Lines~\ref{line:imcores:updatesolution}--\ref{line:imcores:updatesolution2}), and the values of $k''$ and $\mathcal{K}'[t_e]$ are updated (Line~\ref{line:imcores:updatek}).
Specifically, note that the order $k^*$ of core $C$ may in principle be less than $k''$, as $C$ is extracted from a subgraph of $G_{\Delta}$.
If this happens, it means that the actual order of the innermost core of $G_{\Delta}$ is equal to $k''$.
This motivates the update rules (and their order) reported in Line~\ref{line:imcores:updatek}.


\begin{mytheorem}\label{th:correctnessAlg3}
Algorithm~\ref{alg:imcores} is sound and complete for Problem~\ref{pbl:maximal}.
\end{mytheorem}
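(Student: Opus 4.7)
The plan is to reduce both soundness and completeness to Lemma~\ref{lemma3} by establishing the loop invariant that, at Line~\ref{line:imcores:lb} while processing $\Delta=[t_s,t_e]$, the computed value $lb$ equals $\max\{k',k''\}$, where $k'$ and $k''$ are the orders of the innermost cores of $G_{[t_s-1,t_e]}$ and $G_{[t_s,t_e+1]}$ (with these orders understood to be $0$ whenever the underlying interval falls outside $T$ or has empty edge set). Given this invariant, soundness is immediate from Lemma~\ref{lemma3}: whenever the test at Line~\ref{line:imcores:updatesolution} fires, $C$ is a maximal \spancore. For completeness, take any maximal \spancore $C_{k,\Delta}$ of $G$; the outer loop enumerates every $t_s$, and $E_\Delta\neq\emptyset$ implies $t_e\leq t^*$ so the inner loop reaches $t_e$ too. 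By Lemma~\ref{lemma2}, $k>\max\{k',k''\}=lb$, so every vertex of $C_{k,\Delta}$ has temporal degree exceeding $lb$ in $G_\Delta$ and therefore lies in $V_{lb}$; combined with Lemma~\ref{lemma1} (which identifies $C_{k,\Delta}$ with the innermost core of $G_\Delta$), this forces the innermost core of $G_\Delta[V_{lb}]$ computed at Line~\ref{line:imcores:core} to coincide with $C_{k,\Delta}$, so $k^*=k>lb$ and $C_{k,\Delta}$ is added at Line~\ref{line:imcores:updatesolution2}.

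The invariant itself will be proved by nested induction. The outer induction on $t_s$ (increasing) carries the auxiliary claim that, at the start of outer iteration $t_s$, $\mathcal{K}'[t_e]$ equals the order of the innermost core of $G_{[t_s-1,t_e]}$ for every $t_e$; this matches the initialization $\mathcal{K}'[t]\gets 0$ when $t_s=0$. The inner induction on $t_e$ (decreasing from $t^*$) carries the claim that, immediately before Line~\ref{line:imcores:lb}, $\max\{\mathcal{K}'[t_e],k''\}=\max\{k',k''\}$. The base case of the inner induction is the iteration with $t_e=t^*$, where $k''=0$; the maximality of $t^*$ forces $E_{[t_s,t^*+1]}=\emptyset$, so the true innermost order of $G_{[t_s,t^*+1]}$ is also $0$ and the invariant holds trivially.

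The main obstacle will be the inductive step of the inner induction, because the $k^*$ returned at Line~\ref{line:imcores:core} is only a lower bound on the true innermost-core order of $G_\Delta$ in the regime $k^*\leq lb$: vertices of the true innermost whose temporal degree in $G_\Delta$ equals $lb$ are filtered out of $V_{lb}$, and their removal can cascade and depress $k^*$. To handle this, I plan to exploit the monotonicity consequence of Proposition~\ref{prp:conteinment}: for fixed right endpoint, the order of the innermost core of $G_{[\cdot,t_e]}$ is non-decreasing as the left endpoint grows, and symmetrically for the other side. A short case split on whether the previous inner iteration sat in the ``$k^*>lb$'' or the ``$k^*\leq lb$'' regime will then show that, whenever the stored $k''$ strictly underestimates the true order $K''$ of $G_{[t_s,t_e+1]}$, horizontal monotonicity already guarantees $\mathcal{K}'[t_e]\geq K''$, so the max at Line~\ref{line:imcores:lb} nevertheless returns $\max\{k',K''\}$. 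A symmetric check will confirm that the update at Line~\ref{line:imcores:updatek}, which maxes $\mathcal{K}'[t_e]$ with the freshly updated $k''$, preserves the outer-loop invariant. Together these steps close the nested induction and thus the theorem.
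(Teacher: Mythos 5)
Your proposal is correct, and at the top level it follows the same skeleton as the paper's proof: soundness and completeness are both reduced to Lemmas~\ref{lemma1}--\ref{lemma3}, with the enumeration order (increasing $t_s$, decreasing $t_e$) guaranteeing that no interval is handled before its superintervals. Where you genuinely go beyond the paper is in the treatment of the quantities fed to Lemma~\ref{lemma3}: the paper's proof simply speaks of ``the condition of Lemma~\ref{lemma3}'' as if $lb$ were by definition $\max\{k',k''\}$, relegating the fact that $\mathcal{K}'[t_e]$ and $k''$ are only proxies to the informal discussion of Line~\ref{line:imcores:updatek} (and even there the justification -- ``the actual order of the innermost core of $G_\Delta$ is equal to $k''$'' -- is imprecise, since the true order may instead be witnessed by $\mathcal{K}'[t_e]$ when $k'>k''$). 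Your nested-induction invariant $lb=\max\{k',k''\}$ is exactly the missing piece, and your resolution of the delicate case checks out: underestimation of $k''$ can only occur when the true innermost order of $G_{[t_s,t_e+1]}$ equals the order of $G_{[t_s-1,t_e+1]}$, and since $[t_s-1,t_e]\sqsubseteq[t_s-1,t_e+1]$, Proposition~\ref{prp:conteinment} gives $\mathcal{K}'[t_e]=k'\geq k''$, so the max at Line~\ref{line:imcores:lb} still returns the correct value; the symmetric argument shows Line~\ref{line:imcores:updatek} restores $\mathcal{K}'[t_e]$ to the true innermost order of $G_{[t_s,t_e]}$ for the next outer iteration. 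Likewise, your completeness step correctly notes that when $k>lb$ the innermost core of $G_\Delta[V_{lb}]$ coincides (as a set, not just in order) with that of $G_\Delta$, a point the paper leaves implicit in Lemma~\ref{lemma3}. In short: same route, but your version actually proves the invariant on which the algorithm's use of Lemma~\ref{lemma3} rests, which the paper only asserts.
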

\begin{proof}
The algorithm processes all temporal intervals $\Delta \sqsubseteq T$ yielding a non-empty edge set $E_\Delta$, in an order such that no interval is processed before one of its superintervals: this guarantees that a span-core recognized as maximal will not be dominated by another span-core found later on. For every $\Delta$ it extracts a core $C$ that is used as a proxy of the innermost core of graph $G_{\Delta}$.
$C$ is added to the output set $\imcores$
only if Lemma~\ref{lemma3} recognizes it as a maximal \spancore, otherwise it is discarded.
This proves the soundness of the algorithm.
Completeness follows from Lemma~\ref{lemma1}, which states that to extract all maximal \spancores it suffices to focus on the innermost cores of graphs $\{G_{\Delta} \mid \Delta \sqsubseteq T\}$, and Lemma~\ref{lemma3} again, which states the condition for a proxy core $C$ to be safely  discarded because it is a non-maximal \spancore.
\end{proof}

\spara{Discussion.}
The worst-case time complexity of Algorithm~\ref{alg:imcores} is the same as the algorithm for computing the overall \spancore decomposition, i.e., $\bigO(|T|^2 \times |E|)$.
It is worth mentioning that it is not possible to do better than this, as the output itself is potentially quadratic in $|T|$.
However, as we will show in Section~\ref{sec:experiments}, the proposed algorithm is in practice much more efficient than computing the overall \spancore decomposition and filtering out the non-maximal span-cores as, in this case, we avoid the visit of portions of the \spancore search space and the computations are run over subgraphs of reduced dimensions.

To conclude, we discuss how the crucial operation of building the subgraph $(V_{lb}, E_{\Delta}[V_{lb}])$ may be carried out efficiently in terms of both time and space.
Consider a fixed timestamp $t_s \in [0, \ldots, t_{max}]$.
The following reasoning holds for every $t_s$.
Let $E^-(t_e) = E_{_{[t_s,t_e]}} \setminus E_{_{[t_s,t_e\!+1]}}$ be the set of edges that are in $E_{_{[t_s,t_e]}}$ but not in $E_{_{[t_s,t_e\!+1]}}$, for
 $t_e \in [t_s, \ldots, t^*\!-1]$.
As a first general step, for each $t_s$, we compute and store \emph{all} edge sets $\{E^-(t_e)\}_{t_e \in [t_s, t^*\!-1]}$.
These operations can be accomplished in $\mathcal{O}(|T| \times |E|)$ overall time, because every $E^-(t_e\!)$ can be computed incrementally from $E_{_{[t_s,t_e]}}$ as $E^-(t_e) = \{(u,v) \in E_{_{[t_s,t_e]}} \mid \tau(u,v,t_e\!+\!1) = 0\}$.
Moreover, for any timestamp $t_e$, we keep a map $\mathcal{D}$ storing all vertices of $G_{_{[t_s,t_e]}}$ organized by degree.
Specifically, the set $\mathcal{D}[k]$ contains all vertices having degree $> k$ in  $G_{_{[t_s,t_e]}}$.
Every vertex in $\mathcal{D}$  is thus replicated a number of times equal to its degree.
This way, the overall space taken by $\mathcal{D}$ is $\bigO(|E|)$, i.e., as much space as $G$.
$\mathcal{D}$ is initialized as empty (when $t_e = t^*$) and repeatedly augmented as $t_e$ decreases, by a linear scan of the various $E^-(t_e)$.
The overall filling of $\mathcal{D}$ (for all $t_e$) therefore takes $\mathcal{O}(|T| \times |E|)$ time.
Then, the desired $V_{lb}$ can be computed in constant time simply as $V_{lb} = \mathcal{D}[lb]$.

As for $E_{\Delta}[V_{lb}]$, for any $t_e$, we first reconstruct $E_{_{[t_s,t_e]}}$ as $E_{_{[t_s,t_e+\!1]}} \cup E^-(t_e)$, having previously computed $E_{_{[t_s,t_e+\!1]}}$.
Note that storing all $E^-(t_e)$ takes $\bigO(|E|)$ space.
That is why we store all $E^-(t_e)$ and reconstruct $E_{_{[t_s,t_e]}}$ afterward (instead of storing the latter, which would take $\bigO(|T| \times |E|)$ space).
$E_{\Delta}[V_{lb}]$ is ultimately derived by a linear scan of $E_{_{[t_s,t_e]}}$, taking all edges in $E_{_{[t_s,t_e]}}$ having both endpoints in $V_{lb}$.
This way, the step of building $E_{\Delta}[V_{lb}]$  for all $t_e$ takes again $\mathcal{O}(|T| \times |E|)$ overall time.


\section{Experiments}
\label{sec:experiments}


In this section we present a performance comparison of our algorithms, as well as a characterization of span-cores extracted.

\noindent\textbf{Datasets.}
We use eight real-world datasets recording timestamped interactions between entities.\footnote{All datasets are made available by the KONECT Project (\url{http://konect.cc}), except for \textsf{StackOverflow} which is part of the SNAP Repository (\url{http://snap.stanford.edu}).}
For each dataset we select a window size to define a discrete time domain, composed of contiguous timestamps of the same duration, and build the corresponding temporal graph.
If multiple interactions occur between two entities during the same discrete timestamp, they are counted as one.
The characteristics of the resulting temporal graphs, along with the selected window sizes (in days), are reported in Table~\ref{tab:datasets}.

\textsf{ProsperLoans} represents the network of loans between the users of Prosper, a marketplace of loans between privates.
\textsf{Last.fm} records the co-listening activity of the Last.fm streaming platform: an edge exists between two users if they listened to songs of the same band within the same discrete timestamp.
\textsf{WikiTalk} is the communication network of the English Wikipedia.
\textsf{DBLP} is the co-authorship network of the authors of scientific papers from the DBLP computer science bibliography.
\textsf{StackOverflow} includes the answer-to-question interactions on the stack exchange of the stackoverflow.com website.
\textsf{Wikipedia} connects users of the Italian Wikipedia that co-edited a page during the same discrete timestamp.
Finally, for both \textsf{Amazon} and \textsf{Epinions}, vertices are users and edges represent the rating of at least one common item within the same discrete timestamp.

\spara{Implementation.}
All methods are implemented in Python (v. 2.7.12) and compiled by Cython.
The experiments run on a machine equipped with Intel Xeon CPU at 2.1GHz and 64GB RAM.

\noindent\textbf{Reproducibility.} Our code is available at \href{https://goo.gl/4WmrPc}{goo.gl/4WmrPc}.

\begin{table}[t!]
\centering
\small
\caption{\small \mbox{Temporal graphs used in the experiments.}}
\vspace{-3mm}
\label{tab:datasets}

\begin{tabular}{c|ccccccc}
\multicolumn{1}{c}{} & \multicolumn{1}{c}{} & \multicolumn{1}{c}{} & \multicolumn{1}{c}{} & \multicolumn{1}{c}{window} & \multicolumn{1}{c}{} \\
\multicolumn{1}{c}{dataset} & \multicolumn{1}{c}{$|V|$} & \multicolumn{1}{c}{$|E|$} & \multicolumn{1}{c}{$|T|$} & \multicolumn{1}{c}{size (days)} & \multicolumn{1}{c}{domain} \\
\hline
\textsf{ProsperLoans} & $89$k & $3$M & $307$ & $7$ & economic \\
\textsf{Last.fm} & $992$ & $4$M & $77$ & $21$ & co-listening \\
\textsf{WikiTalk} & $2$M & $10$M & $192$ & $28$ & communication \\
\textsf{DBLP} & $1$M & $11$M & $80$ & $366$ & co-authorship \\
\textsf{StackOverflow} & $2$M & $16$M & $51$ & $56$ & question answering \\
\textsf{Wikipedia} & $343$k & $18$M & $101$ & $56$ & co-editing \\
\textsf{Amazon} & $2$M & $22$M & $115$ & $28$ & co-rating \\
\textsf{Epinions} & $120$k & $33$M & $25$ & $21$ & co-rating \\
\hline
\end{tabular}
\vspace{1mm}
\end{table}

\subsection{\Spancore decomposition}
We compare the two methods to compute a complete decomposition
described in Section~\ref{sec:algorithms}, i.e.,
the baseline \baseline\ and the proposed \cores, in terms of execution time, memory, and total number of vertices input to the $\textsf{core-decomposition}$ subroutine. We report these measures, together with the numbers of \spancores and maximal \spancores of each dataset, in Table~\ref{tab:evaluation}.

In terms of execution time, \cores\ considerably outperforms \baseline\ in all datasets, achieving a speed-up from $2.1$ up to two orders of magnitude.
The speed-up is explained by the number of vertices processed by the $\textsf{core-decomposition}$ subroutine, which is the most time-consuming step of the algorithms albeit linear in the size of the input subgraph.
The difference of this quantity between \cores\ and \baseline\ reaches an order of magnitude in the \textsf{WikiTalk}, \textsf{Wikipedia}, and \textsf{Epinions} dataset, confirming the effectiveness of the ``horizontal containment'' relationships.
The memory required by the two procedures is comparable in all cases since the largest structures needed in memory are the temporal graph itself and the set $\coresset$ of all \spancores.

\subsection{Maximal \spancores}
We compare our \innermosts\ algorithm to the na\"ive approach, described ad the beginning of Section~\ref{sec:algorithms:maximal}, based on running the \cores\ algorithm and filtering out the non-maximal \spancores, which we refer to as \baselineinnermosts.
The results are again reported in Table~\ref{tab:evaluation}.

\baselineinnermosts\ behaves very similarly to \cores: they only differ for the filtering mechanism which requires a few additional seconds in most cases.
\innermosts\ is much faster than \baselineinnermosts\ for all datasets, with a speed-up from $1.3$ for the \textsf{Epinions} dataset to $9.4$ for the \textsf{WikiTalk} dataset.
Except for the datasets \textsf{Last.fm} and \textsf{Epinions}, the difference in terms of number of processed vertices is between two and three orders of magnitude, proving the advantages of the top-down strategy of \innermosts, which avoids the visit of portions of the \spancore search space and handles the overhead of reconstructing graphs, i.e., $(V_{lb}, E_{\Delta}[V_{lb}])$, efficiently.
Finally, the memory requirements of the two methods are comparable for all datasets.

\begin{table}
\setlength{\tabcolsep}{3pt}
\centering
\scriptsize
\caption{\small Evaluation of the proposed algorithms: number of output \spancores, time, memory, and number of processed vertices.}
\vspace{-3mm}
\label{tab:evaluation}

\begin{tabular}{c|c|c|ccc}
\multicolumn{1}{c}{} & \multicolumn{1}{c}{} & \multicolumn{1}{c}{\# output} & \multicolumn{1}{c}{time} & \multicolumn{1}{c}{memory} & \multicolumn{1}{c}{\# processed} \\
\multicolumn{1}{c}{dataset} & \multicolumn{1}{c}{method} & \multicolumn{1}{c}{\spancores} & \multicolumn{1}{c}{(s)} & \multicolumn{1}{c}{(GB)} & \multicolumn{1}{c}{vertices} \\
\hline
 \multirow{4}{*}{\textsf{ProsperLoans}} & \baseline & \multirow{2}{*}{$4\,273$} & $101$ & $2$ & $55$M \\
 & \cores & & $46$ & $2$ & $27$M \\ \cline{2-6}
 & \baselineinnermosts &  \multirow{2}{*}{$293$} & $48$ & $2$ & $27$M \\
 & \innermosts & & $8$ & $2$ & $980$k \\
\hline
\multirow{4}{*}{\textsf{Last.fm}} & \baseline &  \multirow{2}{*}{$126\,819$} & $707$ & $0.5$ & $2$M \\
 & \cores & & $199$ & $0.5$ & $531$k \\ \cline{2-6}
 & \baselineinnermosts &  \multirow{2}{*}{$1\,670$} & $202$ & $0.5$ & $531$k \\
 & \innermosts & & $57$ & $0.5$ & $271$k \\
\hline
\multirow{4}{*}{\textsf{WikiTalk}} & \baseline &  \multirow{2}{*}{$19\,693$} & $322\,302$ & $36$ & $25$B \\
 & \cores & & $1\,084$ & $36$ & $555$M \\ \cline{2-6}
 & \baselineinnermosts &  \multirow{2}{*}{$632$} & $1\,194$ & $36$ & $555$M \\
 & \innermosts & & $126$ & $35$ & $2$M \\
\hline
\multirow{4}{*}{\textsf{DBLP}} & \baseline &  \multirow{2}{*}{$6\,135$} & $10\,506$ & $11$ & $1$B \\
 & \cores & & $278$ & $11$ & $150$M \\ \cline{2-6}
 & \baselineinnermosts &  \multirow{2}{*}{$268$} & $292$ & $11$ & $150$M  \\
 & \innermosts & & $116$ & $11$ & $620$k \\
\hline
\multirow{4}{*}{\textsf{StackOverflow}} & \baseline &  \multirow{2}{*}{$1\,238$} & $5\,360$ & $10$ & $1$B \\
 & \cores & & $245$ & $10$ & $127$M \\ \cline{2-6}
 & \baselineinnermosts &  \multirow{2}{*}{$129$} & $245$ & $10$ & $127$M \\
 & \innermosts & & $128$ & $10$ & $3$M \\
\hline
\multirow{4}{*}{\textsf{Wikipedia}} & \baseline &  \multirow{2}{*}{$125\,191$} & $17\,155$ & $4$ & $1$B \\
 & \cores & & $522$ & $4$ & $35$M \\ \cline{2-6}
 & \baselineinnermosts &  \multirow{2}{*}{$2\,147$} & $537$ & $4$ & $35$M \\
 & \innermosts & & $201$ & $4$ & $320$k \\
\hline
\multirow{4}{*}{\textsf{Amazon}} & \baseline &  \multirow{2}{*}{$29\,318$} & $10\,415$ & $18$ & $2$B \\
 & \cores & & $409$ & $18$ & $247$M \\ \cline{2-6}
 & \baselineinnermosts &  \multirow{2}{*}{$303$} & $580$ & $18$ & $247$M \\
 & \innermosts & & $123$ & $18$ & $688$k \\
\hline
\multirow{4}{*}{\textsf{Epinions}} & \baseline &  \multirow{2}{*}{$63\,111$} & $699$ & $4$ & $39$M \\
 & \cores & & $186$ & $4$ & $3$M \\ \cline{2-6}
 & \baselineinnermosts &  \multirow{2}{*}{$320$} & $201$ & $4$ & $3$M \\
 & \innermosts & & $154$ & $5$ & $129$k \\
\hline
\end{tabular}
\end{table}

\spara{Characterization.} 
%
We finally compare and characterize all \spancores against maximal \spancores.
At first, Table~\ref{tab:evaluation} shows that \spancores are at least one order of magnitude more numerous than maximal \spancores for all datasets, with the maximum difference of two orders of magnitude for the \textsf{Epinions} dataset.

In Figure~\ref{fig:coresvsmaximals_k} we show the number (top) and the average size (bottom) of \spancores and maximal \spancores as a function of the order $k$ for the \textsf{DBLP} and \textsf{Epinions} datasets.
For both datasets, the number of maximal \spancores is at least one order of magnitude lower than the total number of \spancores up to a quarter of the $k$ domain, where the \spancores are more numerous.
Instead, in the rest of the domain, they mostly coincide due to the maximality condition over $|\Delta|$.
The average size is also smaller for maximal \spancores, difference that wears thin when the gap between the numbers of \spancores and maximal \spancores starts decreasing since, for high values of $k$, most (or all) \spancores are maximal.

Figure~\ref{fig:coresvsmaximals_delta} shows a different picture when numbers and average sizes are shown as a function of the size of the span $|\Delta|$.
For both datasets, the number of \spancores and maximal \spancores decreases with, on average, a constant gap of one and two orders of magnitude, respectively, since the number of intervals decreases as $|\Delta|$ increases.
On the other hand, the behavior of the average size is quite different between the two datasets.
For the \textsf{DBLP} dataset, the average size of \spancores is much higher than the average size of maximal \spancores for low values of $|\Delta|$, then the difference decreases and
vanishes at the end of domain where a maximal \spancore of $|\Delta| = 37$ dominates all other \spancores of $|\Delta| \geq 20$.
Instead, for the \textsf{Epinions} dataset, the average size of all \spancores and 
maximal \spancores follows the same behavior, with a difference of less than
an order of magnitude, because the maximality condition over $k$ excludes the largest \spancores from the set of maximal \spancores.

\begin{figure}
\centerline{
\begin{tabular}{cc}
\includegraphics[width=0.5\columnwidth]{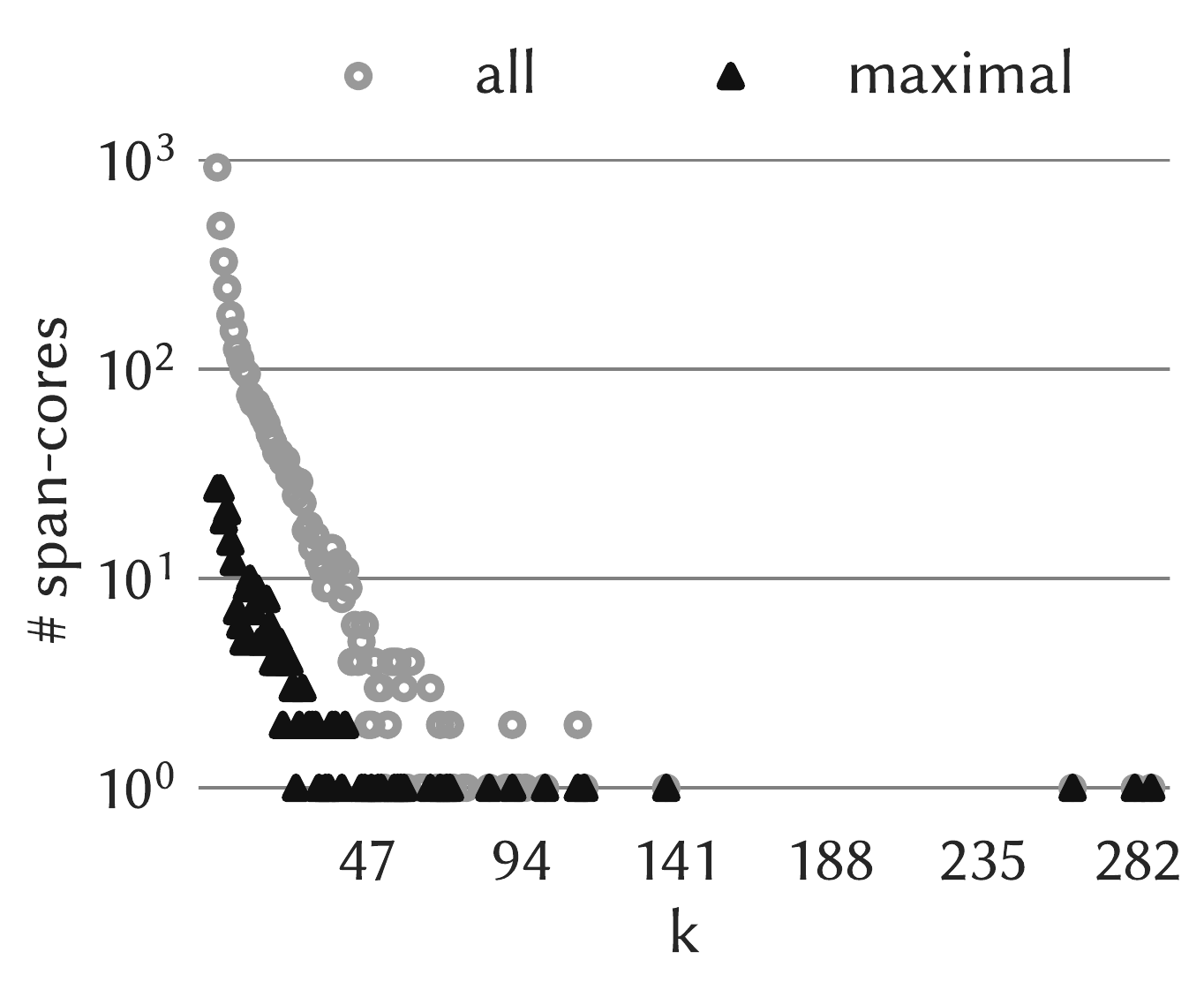} & \includegraphics[width=0.5\columnwidth]{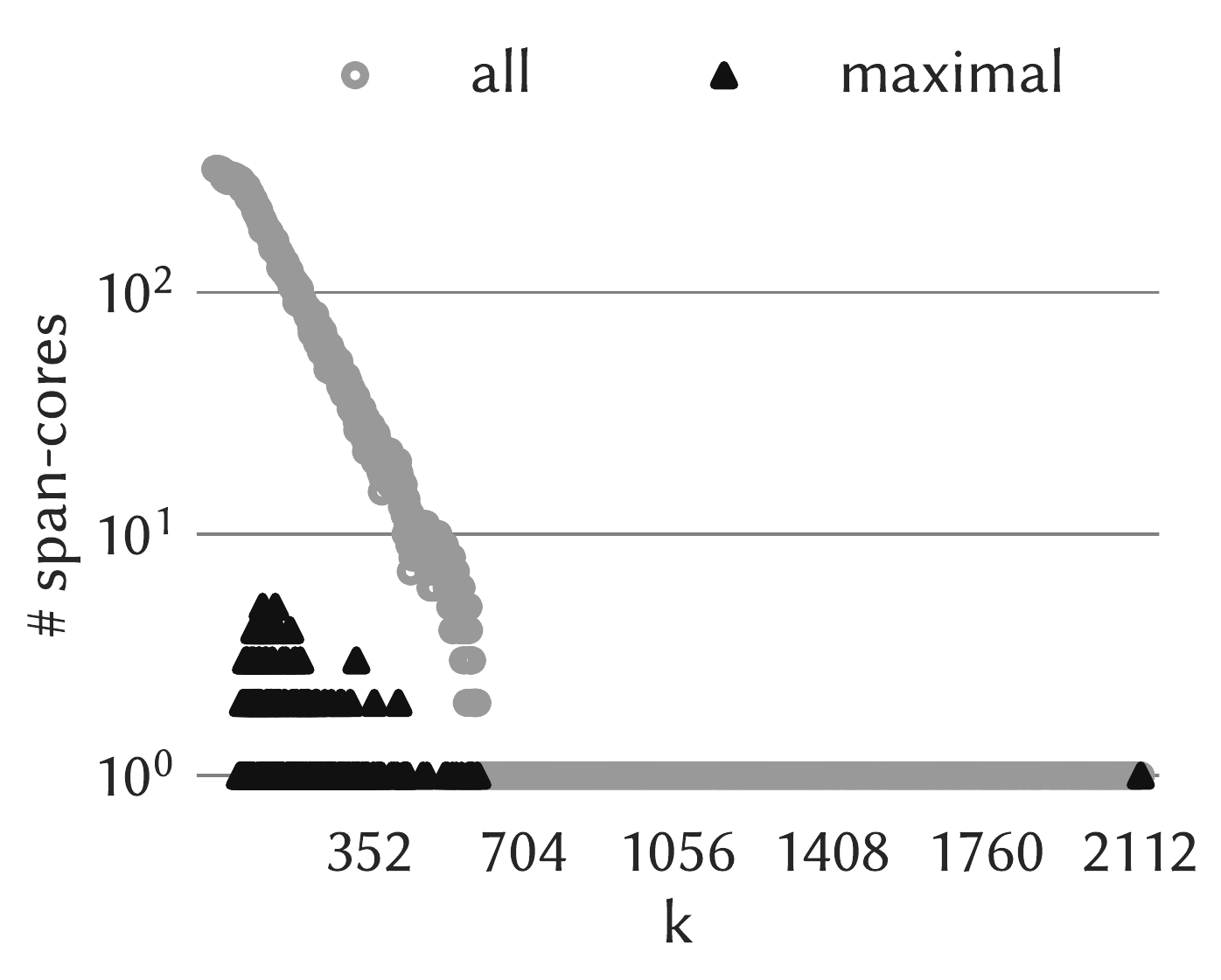} \vspace{-1mm}\\
\includegraphics[width=0.5\columnwidth]{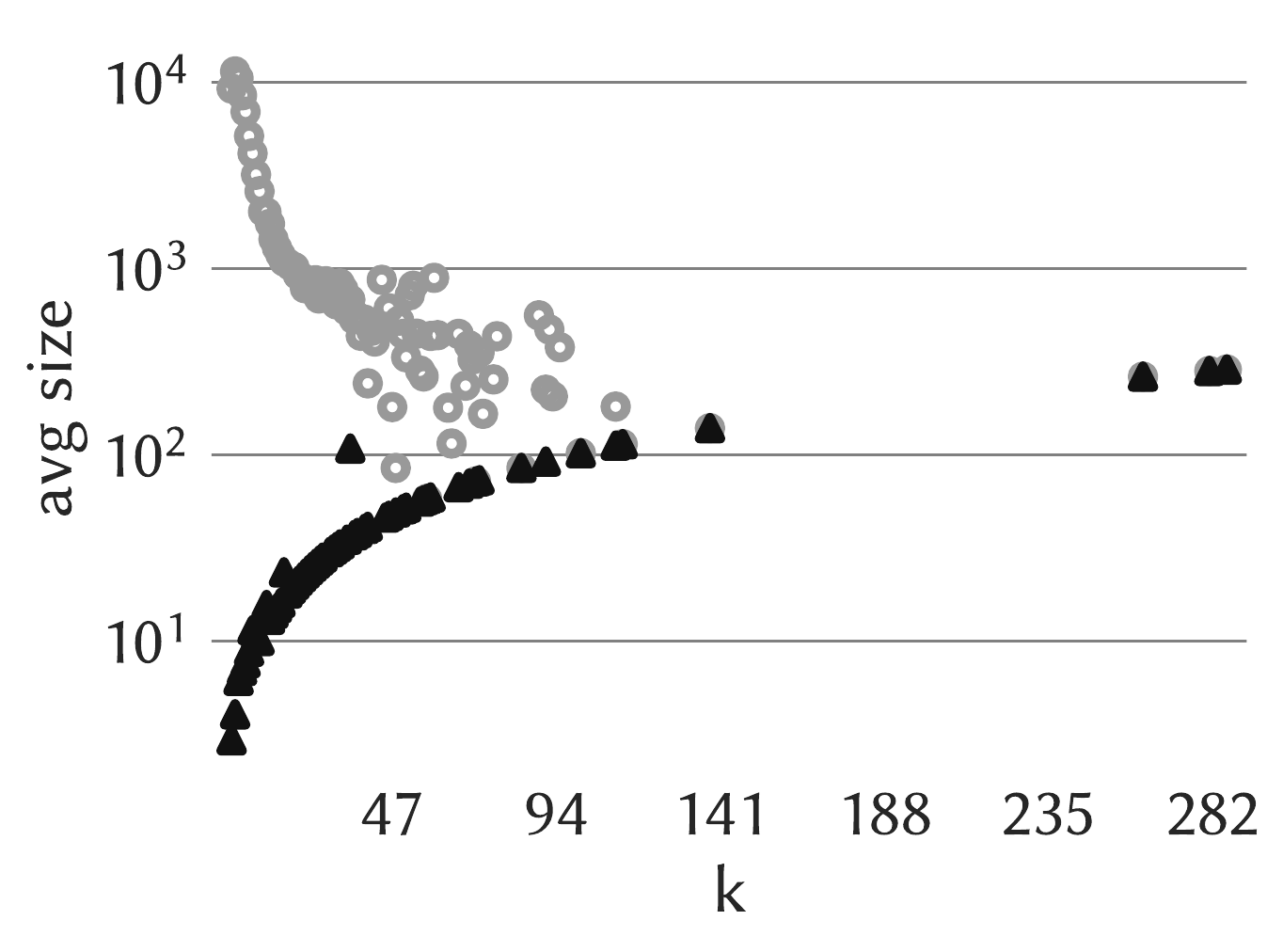} & \includegraphics[width=0.5\columnwidth]{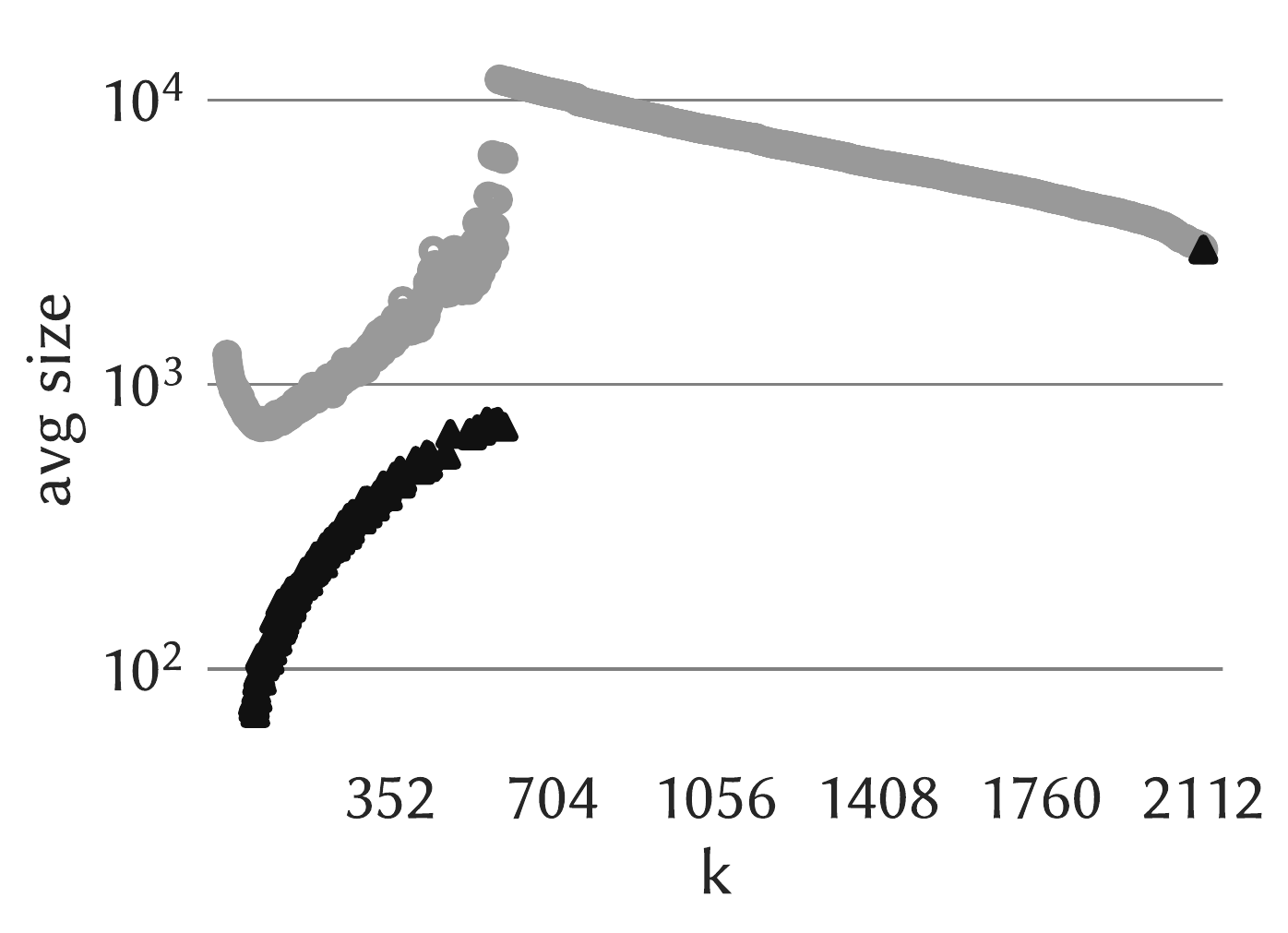} \vspace{-2mm}\\
\footnotesize{\textsf{DBLP}} & \footnotesize{\textsf{Epinions}}
\end{tabular}
}
\vspace{-3mm}

\caption{\small \label{fig:coresvsmaximals_k} Top plots: number of all \spancores and maximal \spancores ($y$ axis) as a function of the order $k$ ($x$ axis). Bottom plots: average size of all \spancores and maximal \spancores ($y$ axis) as a function of the order $k$ ($x$ axis).}
\end{figure}

\begin{figure}
\centerline{
\begin{tabular}{cc}
\includegraphics[width=0.5\columnwidth]{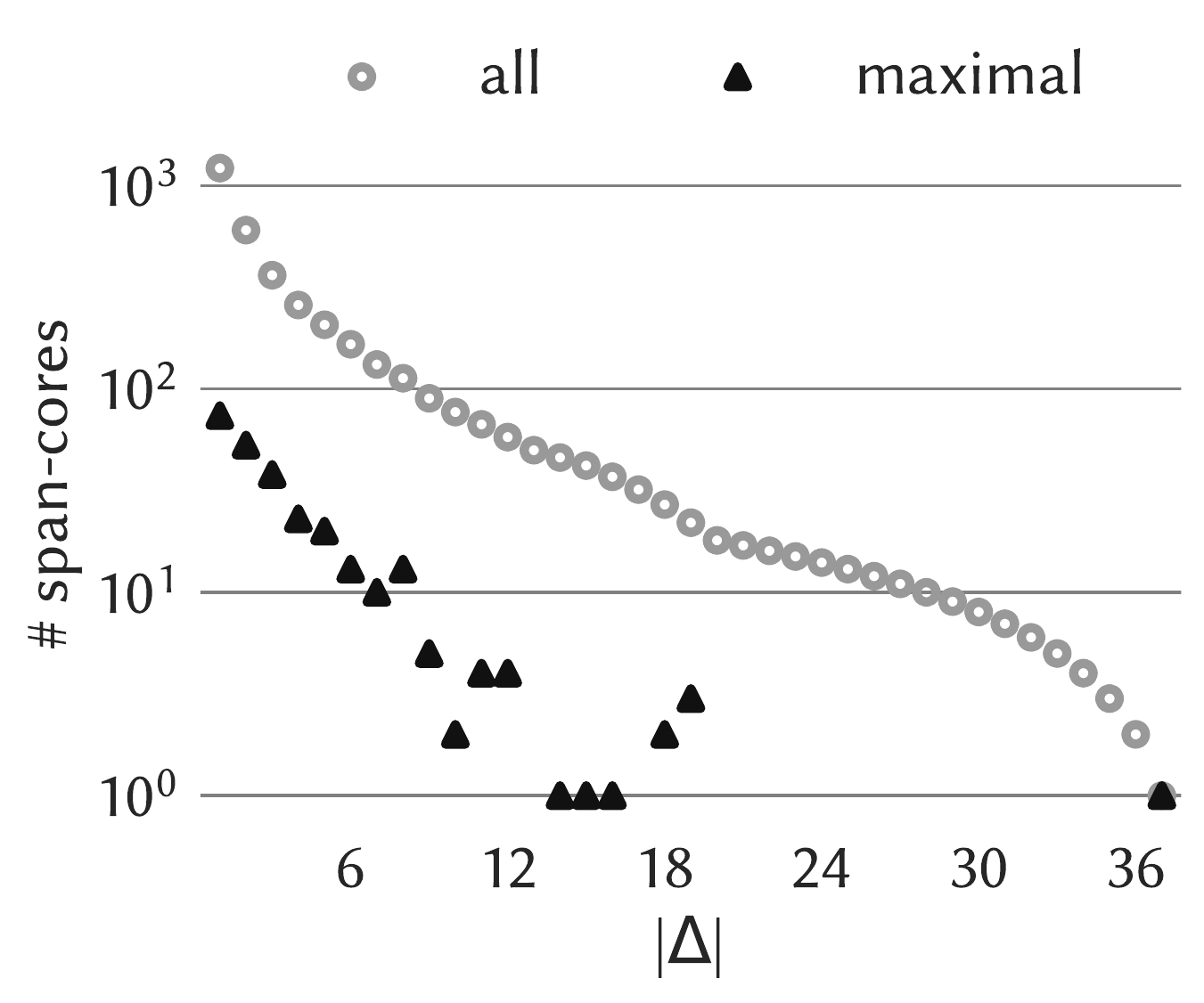} & \includegraphics[width=0.5\columnwidth]{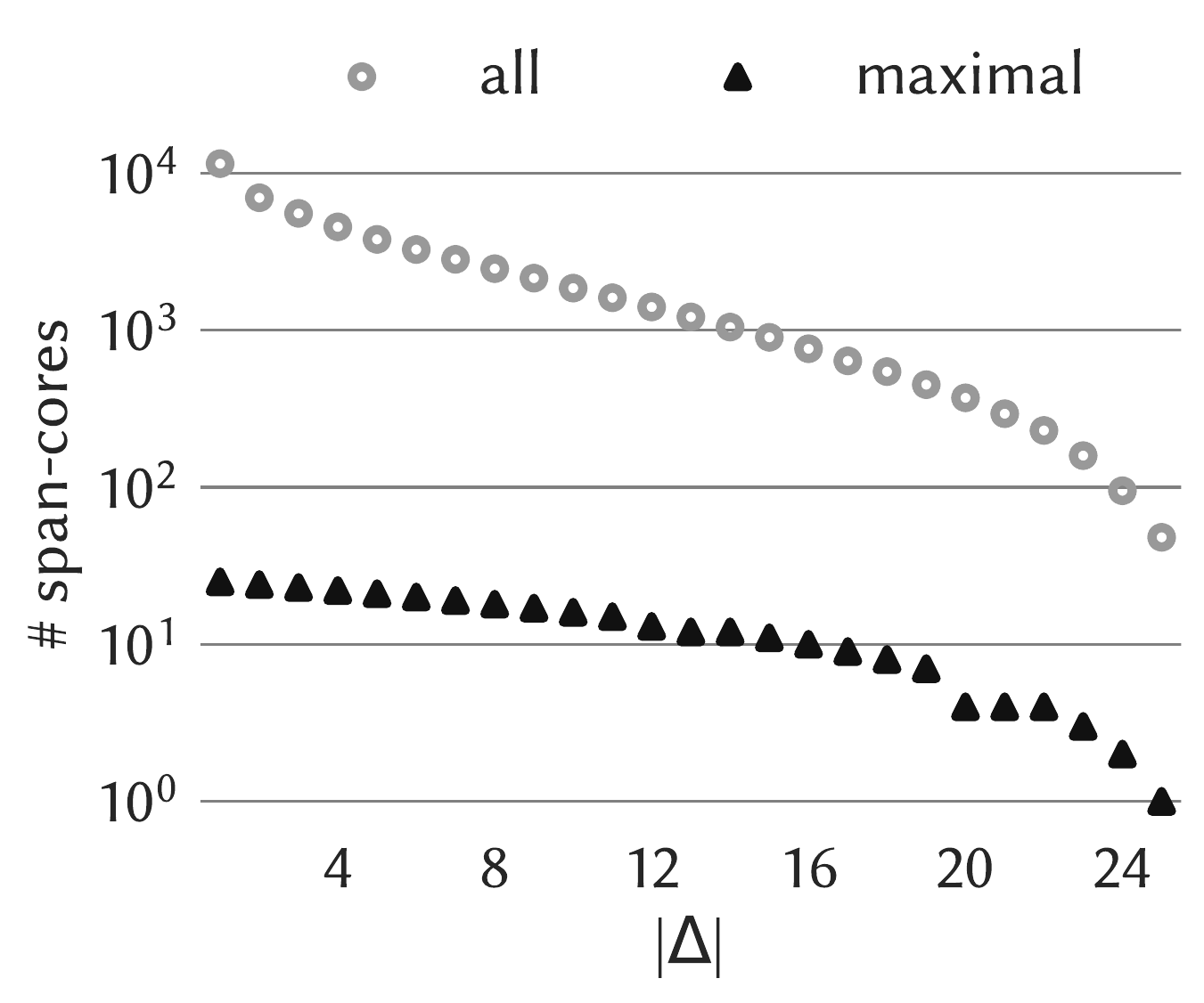} \vspace{-1mm}\\
\includegraphics[width=0.5\columnwidth]{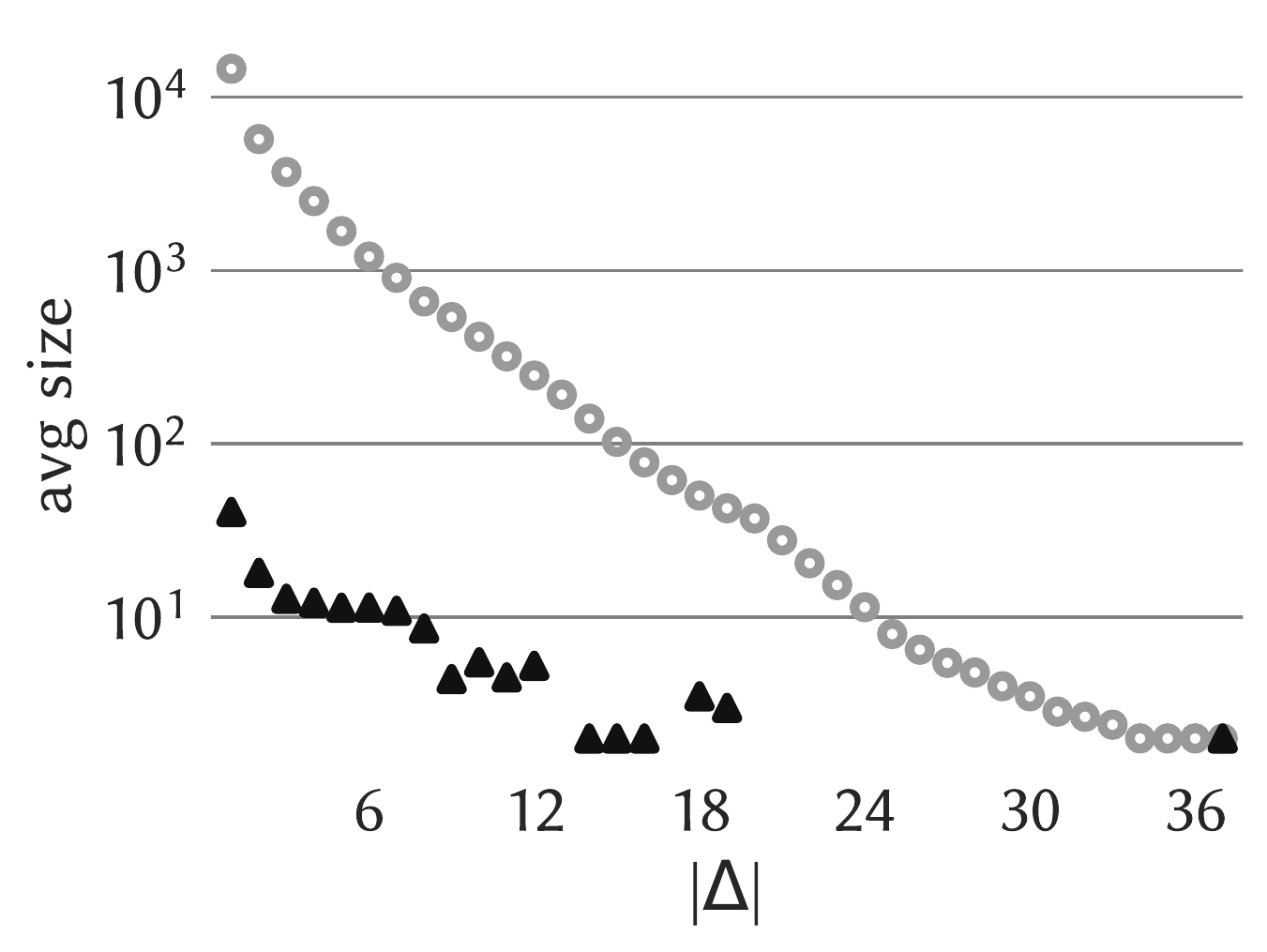} & \includegraphics[width=0.5\columnwidth]{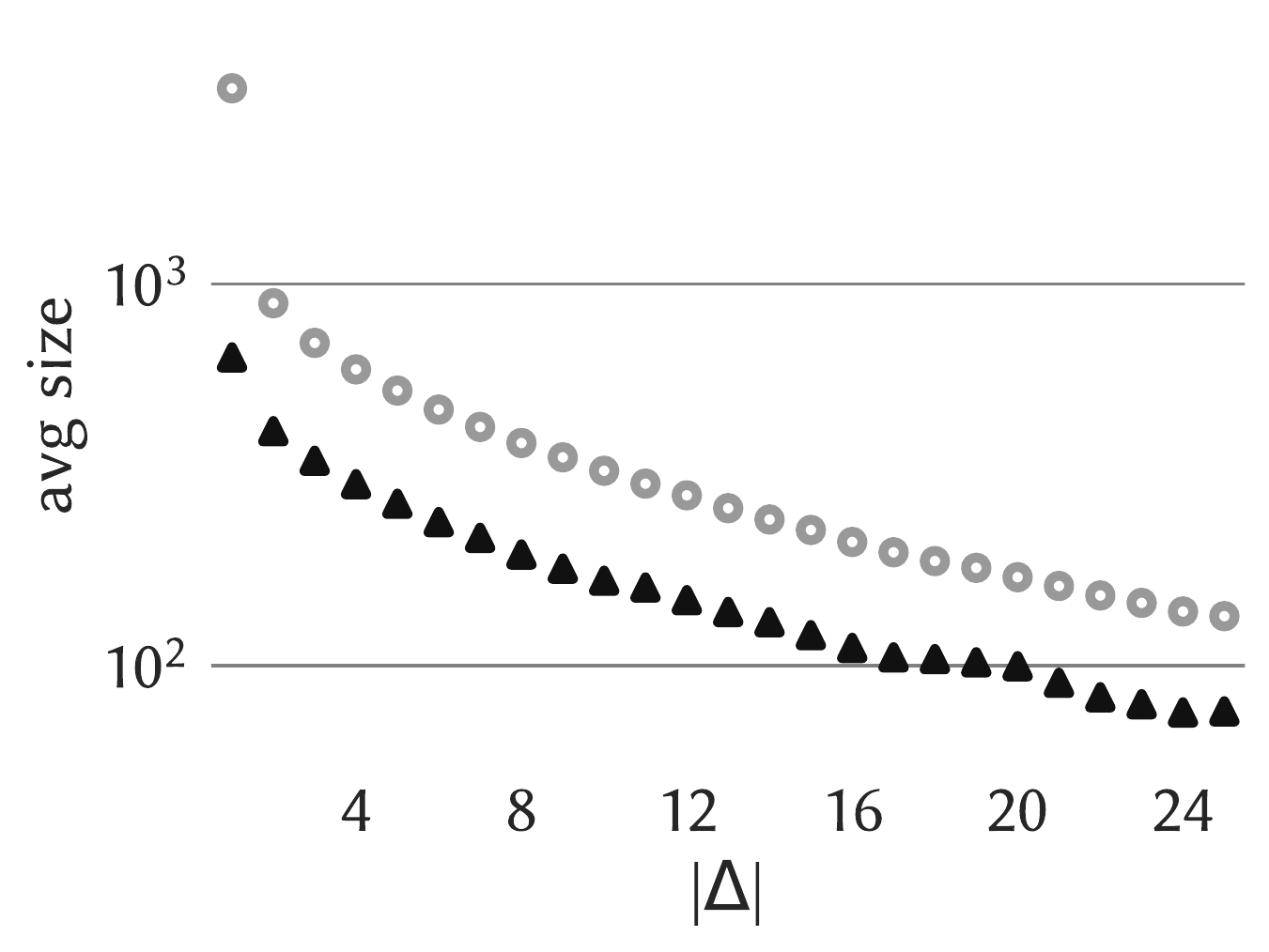} \vspace{-2mm}\\
\footnotesize{\textsf{DBLP}} & \footnotesize{\textsf{Epinions}}
\end{tabular}
}
\vspace{-3mm}

\caption{\small \label{fig:coresvsmaximals_delta} Top plots: number of all \spancores and maximal \spancores ($y$ axis) as a function of the size of the temporal span $|\Delta|$ ($x$ axis). Bottom plots: average size of all \spancores and maximal \spancores ($y$ axis) as a function of the size of the temporal span $|\Delta|$ ($x$ axis).}
\end{figure}

\section{Applications}
\label{sec:casestudies}

In this section we illustrate applications of (maximal) \spancores in the analysis of face-to-face interaction networks.
We use three datasets gathered by a proximity-sensing infrastructure with a resolution of $20$ seconds.
The first dataset, named \textsf{PrimarySchool}\footnote{Available at \href{http://www.sociopatterns.org}{sociopatterns.org}.\label{foot:school}}, contains the contact events between $242$
individuals ($232$ children and $10$ teachers) in a primary school in Lyon, during two days \cite{Stehle:2011}.
The \textsf{HighSchool}$^{\ref{foot:school}}$ dataset gives the interactions between students and teachers ($327$ individuals overall) of nine classes during five days in a high school in Marseilles \cite{Fournet:PLOS2015}.
Finally, the \textsf{HongKong} dataset describes the interactions of people in a primary school in Hong Kong for eleven consecutive days
\cite{sapienza2015detecting}.
The school population consists of $709$ children and $65$ teachers divided into thirty classes.
For all three datasets we use a window size of $5$ minutes and discard \spancores of $|\Delta| = 1$, i.e., having span of $5$ minutes, since they represent extremely short group interactions, not significant for our purposes.
On these datasets we show three types of interesting temporal patterns, i.e., social activities of groups of students within a school day, mixing of gender and class, and length of social interactions in groups.

\begin{figure}[h!t!]
\begin{tabular}{c}
\centerline{\includegraphics[width=1.05\columnwidth]{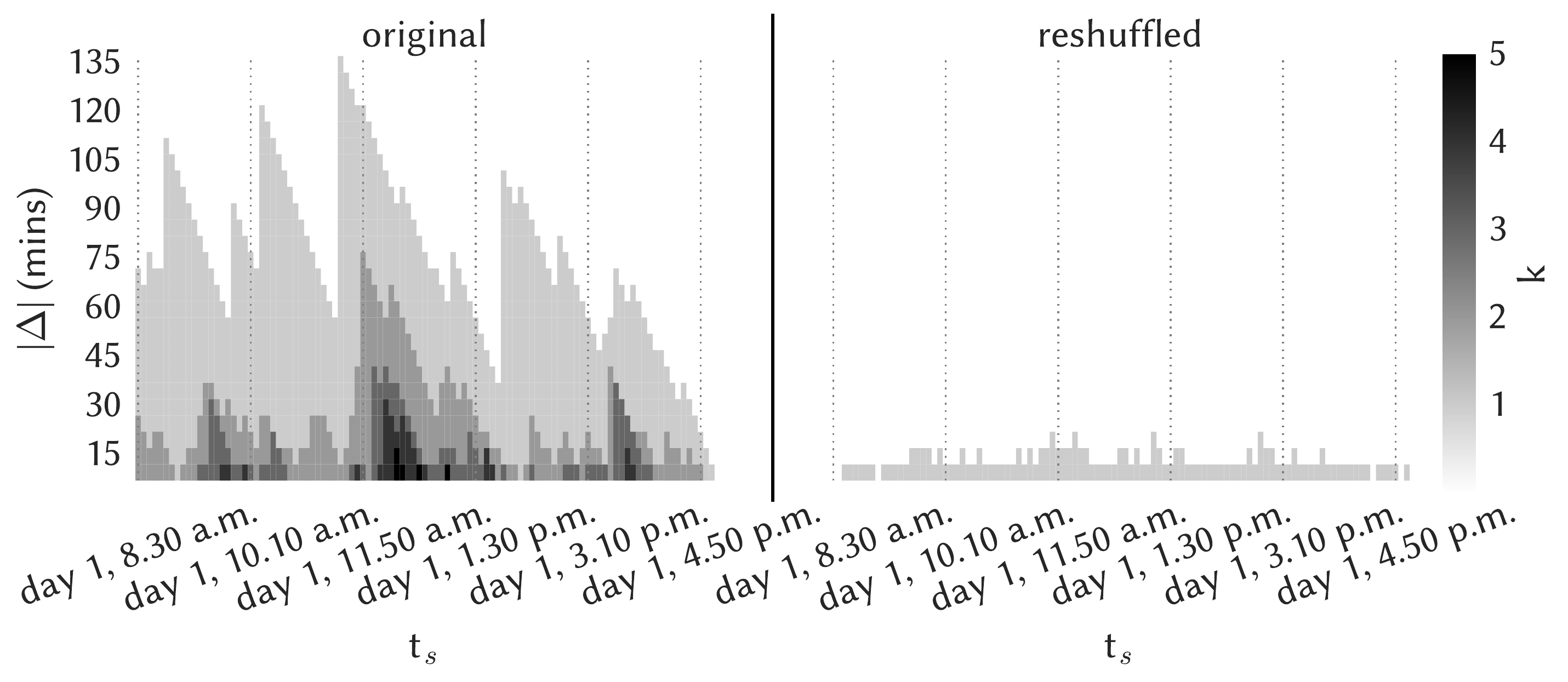}} \vspace{-3mm}\\
\footnotesize{\textsf{PrimarySchool}} \\
\centerline{\includegraphics[width=1.05\columnwidth]{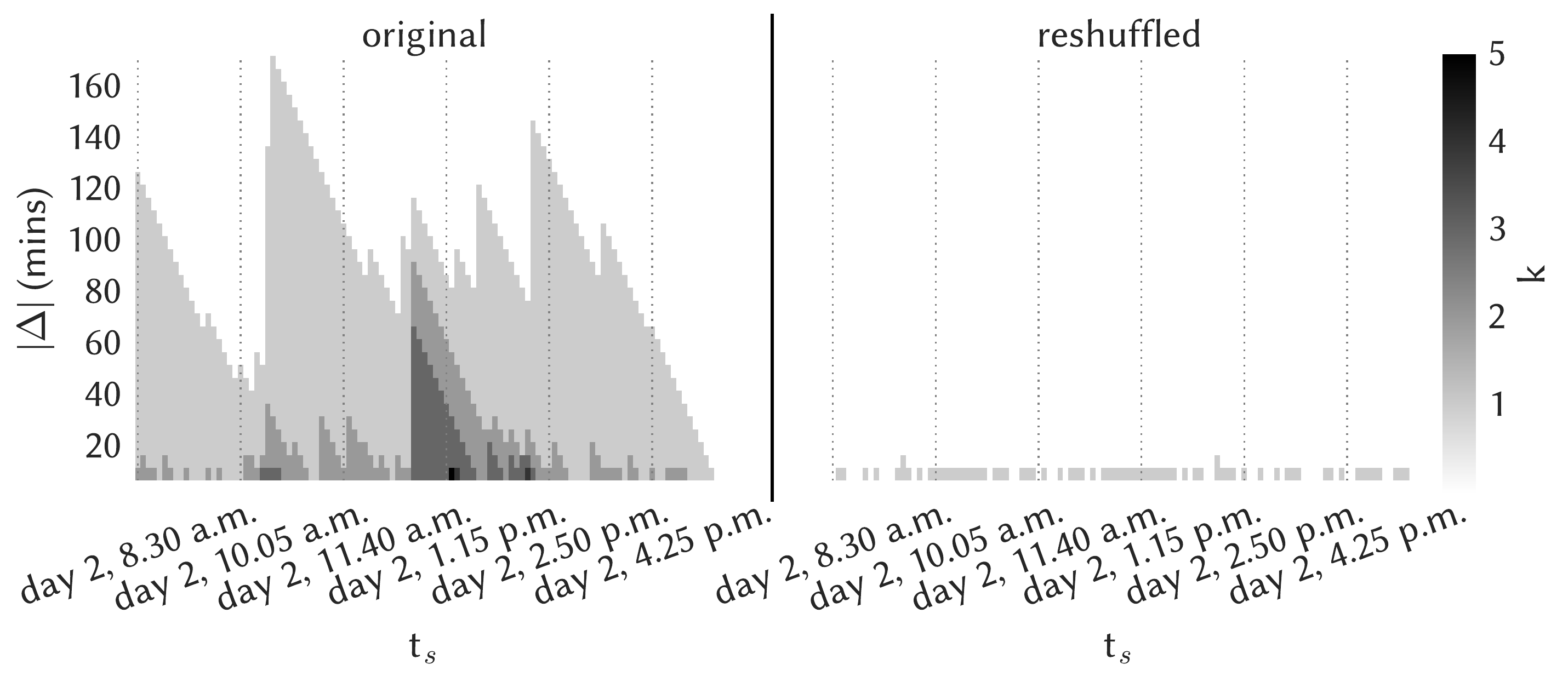}} \vspace{-3mm}\\
\footnotesize{\textsf{HighSchool}} \\
\end{tabular}
\vspace{-3mm}

\caption{\small \label{fig:temporalactivity} Temporal activity of a school day of the \textsf{PrimarySchool} and \textsf{HighSchool} datasets: the $x$ axis reports the hour of the day at which the span of a \spancore starts, the $y$ axis specifies the size of the span (in minutes), and the color scale shows the order $k$.
At a glance, it can be observed that the temporal structure of the \spancore decomposition detects time-evolving community structures in the original datasets (left plots) that completely disappears in the reshuffled datasets (right plots).}
\end{figure}

\subsection{Temporal patterns}
\spara{Temporal activity.}
We first show how \spancores\ yield a simple temporal analysis of social activities of groups of people within a school day.
The left side of Figure~\ref{fig:temporalactivity} reports colormaps of the order $k$ of the span-cores as a function of their starting time $t_s$ ($x$ axis) and of the size of their temporal span $|\Delta|$ ($y$ axis), for a school day of the \textsf{PrimarySchool} and \textsf{HighSchool} datasets.
Darker gray indicates \spancores of high order and slots located in the upper part of the plots refer to \spancores of long span.
In both datasets, fluctuations of $k$ and $|\Delta|$ are observed along the day, which can be related to school events.
Around $10$ a.m., the size of the span $|\Delta|$ reaches a local maximum in correspondence to the morning break, which means that students establish long-lasting interactions that hold beyond the break itself.
Moreover, when classes gather for the lunch break, the order $k$ reaches its maximum value since students tend to form larger and more cohesive groups.

In order to verify that these results are not trivially derived from the general temporal activity, as simply given by the number of interactions in each timestamp, we compare our findings to a null model.
At each timestamp of the temporal graphs, we reshuffle the edges by repeating the following operations, up to when all edges have been processed:
select at random two edges with no common vertices, e.g., $(u,v)$ and $(w,z)$, and transform them into $(u,z)$ and $(w,v)$.
This reshuffling preserves degree of each vertex in each timestamp and global activity (i.e., number of contacts per timestamp), but destroys correlations between edges of successive timestamps.
In the right side of Figure~\ref{fig:temporalactivity} we show the results of the temporal analysis described above for the reshuffled datasets.
In both, the values of $|\Delta|$ and $k$ reached are much smaller than in the original datasets.
The size of the span $|\Delta|$ is always shorter than $20$ minutes, while in the original datasets it is much longer, up to $170$ minutes, and the order $k$ is always equal to $1$, compared to the original maximum of $5$.
The time-evolving communities detected in the original datasets are completely lost after the reshuffling, where no temporal structure of the \spancores is observed.
This proves that the temporal schema of \spancore decomposition is not simply a consequence of the overall activity but that \spancores represent a concrete method to detect complex structures evolving in time.

\spara{Mixing patterns.}
We now show analysis of mixing patterns of students with respect to gender and class. Such metadata is indeed available for the individuals of the \textsf{PrimarySchool} dataset.
We define as \emph{gender purity} of a \spancore the fraction of individuals of the most represented gender within the \spancore.
\emph{Class purity} is analogously defined.
The left plot of Figure~\ref{fig:purity} reports the temporal evolution of gender and class purity during the first school day of the \textsf{PrimarySchool} dataset: at each timestamp $t$, the curves represent the average purities of the maximal \spancores spanning $t$.
During lessons, when students are in their own classes, class purity has naturally very high values, very close to $1$. Gender purity is instead rather low.
On the other hand, when students are gathered together, during the morning break at $10$ a.m. and the lunch break between $12$ a.m. and $2$ p.m., the situation is overturned: gender purity reaches large values while class purity drastically decreases.
This shows that primary school students group with individuals of the same class, disregarding the gender, only when they are forced by the schedule of the lessons, but prefer to interact with students of the same gender during breaks, in agreement with a previous study of the same dataset \cite{Stehle:2013}.

The right plot of Figure~\ref{fig:purity} shows the temporal evolution of gender and class purity with gender and class randomly reshuffled among individuals.
The two curves are more flat and the anti-correlation between them completely vanishes.
This testifies that the results on the original dataset are not simply due to the relative abundance of individuals of each type interacting at each time, but reflect genuine mixing patterns over time.


\begin{figure}
\centerline{\includegraphics[width=1.05\columnwidth]{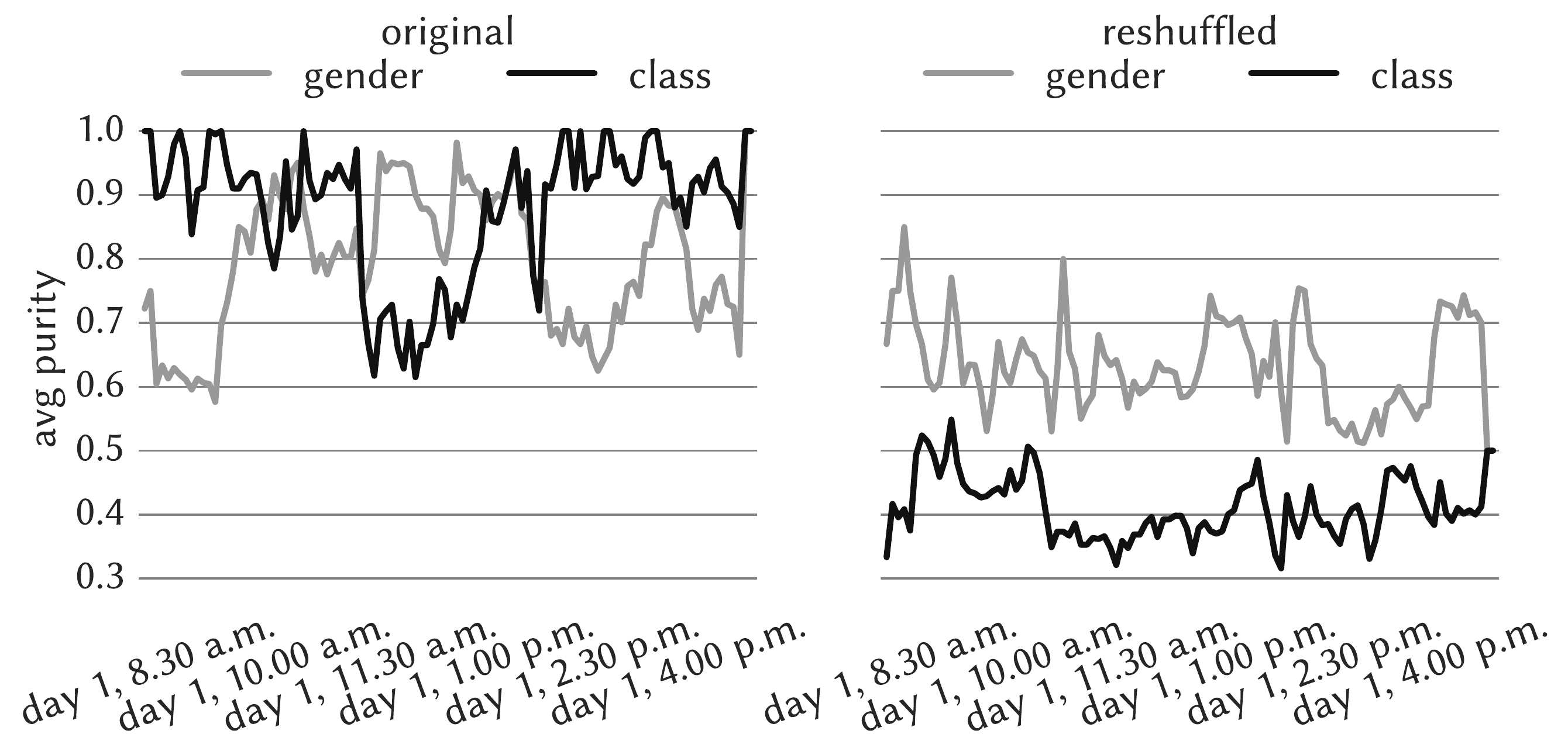}}
\vspace{-3mm}

\caption{\small \label{fig:purity} Temporal evolution (time on the $x$ axis) of average gender purity and average class purity ($y$ axis) of the maximal \spancores of the \textsf{PrimarySchool} dataset.
Original data on the left, reshuffled data on the right.}
\end{figure}

\spara{Interaction length.}
Finally, we analyze the duration of interactions of social groups in schools by studying the distribution of the size of the span of the maximal \spancores of the three datasets (Figure~\ref{fig:lengthdistribution}). All distributions are extremely skewed with broad tails: most maximal span-cores have duration less than $1$ hour, but
durations much larger than the average can also be observed. Interestingly, similar functional shapes are shown by the three datasets, confirming a robust statistical behavior.
We also note that similar robust broad distributions have been observed for simpler characteristics of human interactions such as the statistics of contact durations \cite{Stehle:2011,Fournet:PLOS2015}.
Outliers appear also at very large durations, especially for the \textsf{HongKong} dataset that has maximal \spancores lasting up to $83$ hours.
Group interactions of such long span are clearly abnormal and represent outliers in the distributions.
We will show, in the following of this section, how to exploit such outliers to detect both irregular contacts and anomalous temporal intervals.

\begin{figure}
\centerline{\includegraphics[width=1.05\columnwidth]{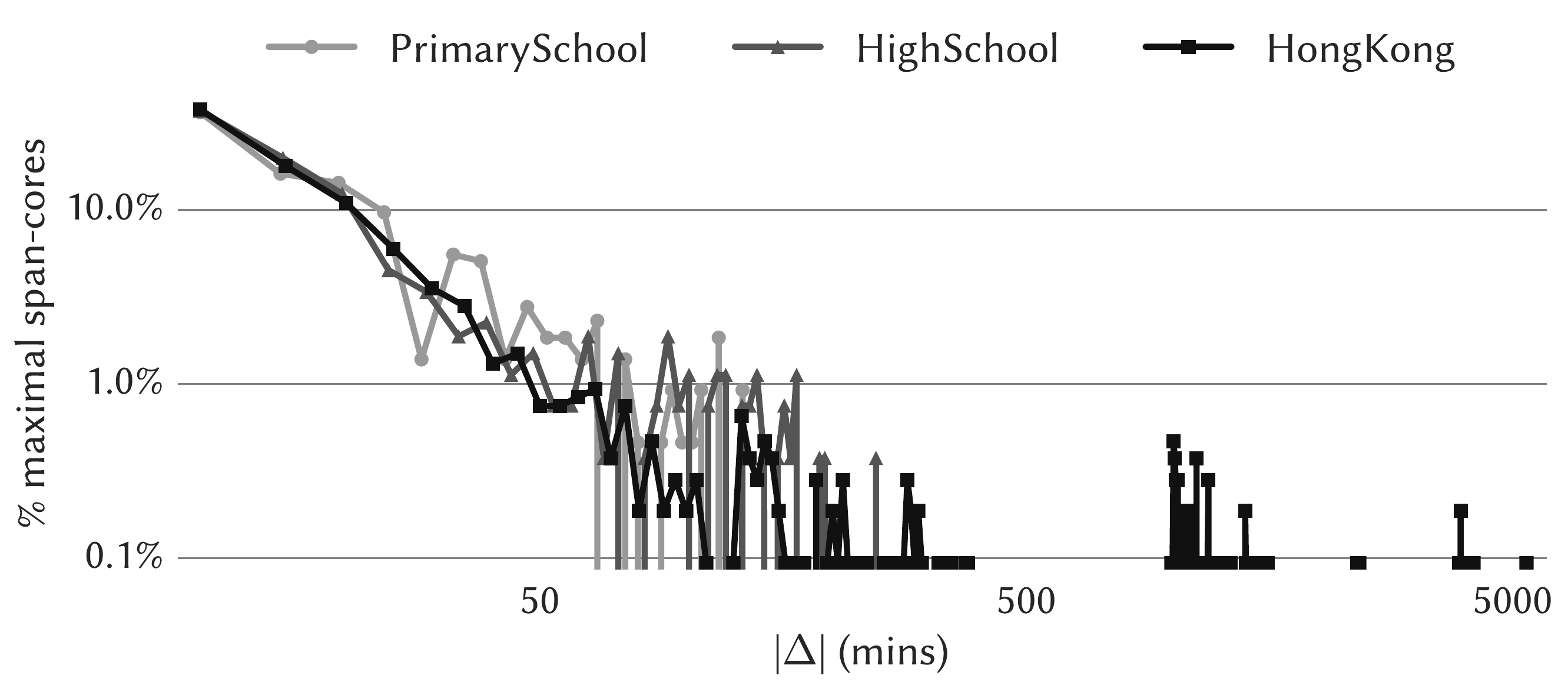}}
\vspace{-3mm}

\caption{\small \label{fig:lengthdistribution} Distribution of the size of the span $|\Delta|$ of the maximal \spancores. The $x$ axis reports the size of the span (in minutes), while the $y$ axis the percentage of maximal \spancores having a given size of the span.}
\end{figure}

\subsection{Anomaly detection}
The identification of anomalous behaviors in temporal networks has been the focus of several studies in the last few years~\cite{mongiovi2013netspot, sapienza2015detecting}.
Based on the above findings, we devise an extremely simple procedure to detect anomalous contacts and intervals of the \textsf{HongKong} dataset that exploits maximal \spancores.
The topmost plot of Figure~\ref{fig:anomaly} reports the number of contacts, i.e., edges, for each timestamp of the original \textsf{HongKong} dataset.
It is easy to notice that there is a lot of constant anomalous activity between school days and during the weekend, i.e., days six and seven.
Unexpectedly, the number of contacts per timestamp does not drop to zero because proximity sensors were left in each class, close to each other, at the end of the lessons. In order to automatically detect these steady activity patterns, we apply the following procedure: $(i)$ find a set of anomalously long temporal intervals supporting maximal \spancores, $(ii)$ identify anomalous vertices, and, $(iii)$ filter out anomalous contacts.

The first step of this procedure requires to find the set of temporal intervals $\mathcal{I} = \{\Delta \sqsubseteq T \mid C_{k,\Delta} \in \imcores \land |\Delta| > tr \}$ that are the span of a maximal \spancore $C_{k,\Delta}$ with size longer than a certain threshold $tr$.
Then, for each timestamp $t \in T$, select as anomalous all those vertices that appear in the \spancores $\{C_{1,\Delta} \mid \Delta \in \mathcal{I} \land t \in \Delta\}$, i.e., the \spancores of $k=1$ whose span is in $\mathcal{I}$ and contains $t$.
Finally, at each timestamp $t \in T$, filter out the contacts having at least an anomalous endpoint at time $t$.
Coherently to the distribution of the size of the span of the maximal \spancores, we select the threshold $tr = 22$ ($110$ minutes).
The results of this filtering procedure are shown in the middle plot of Figure~\ref{fig:anomaly}.
The number of contacts during school days remains substantially unchanged, while the activity noticeably decreases in-between.
Identifying as positives the contacts occurring when the school is closed and as negatives all the others (i.e., when the school is open), this approach achieves a precision of $0.91$ and a recall of $0.64$.

We can refine this anomaly detection process by identifying, in addition to anomalous contacts, also anomalous temporal intervals.
We define a timestamp $t \in T$ as anomalous if the ratio between the number of original contacts (top plot of Figure~\ref{fig:anomaly}) and the number of filtered contacts (middle plot of Figure~\ref{fig:anomaly}) exceeds a given threshold.
We apply this further filtering to the \textsf{HongKong} dataset with a threshold of $1.5$ and report the results in the bottommost plot of Figure~\ref{fig:anomaly}.
The number of contacts when the school is closed drops to zero, while the activity during school days is not modified, except for the last one, which is affected by the proximity to the end of the time domain.
The overall procedure yields a slightly higher value of precision, $0.93$, and substantially improves the recall to $0.99$.

\begin{figure}
\vspace{-1mm}
\centerline{\includegraphics[width=\columnwidth]{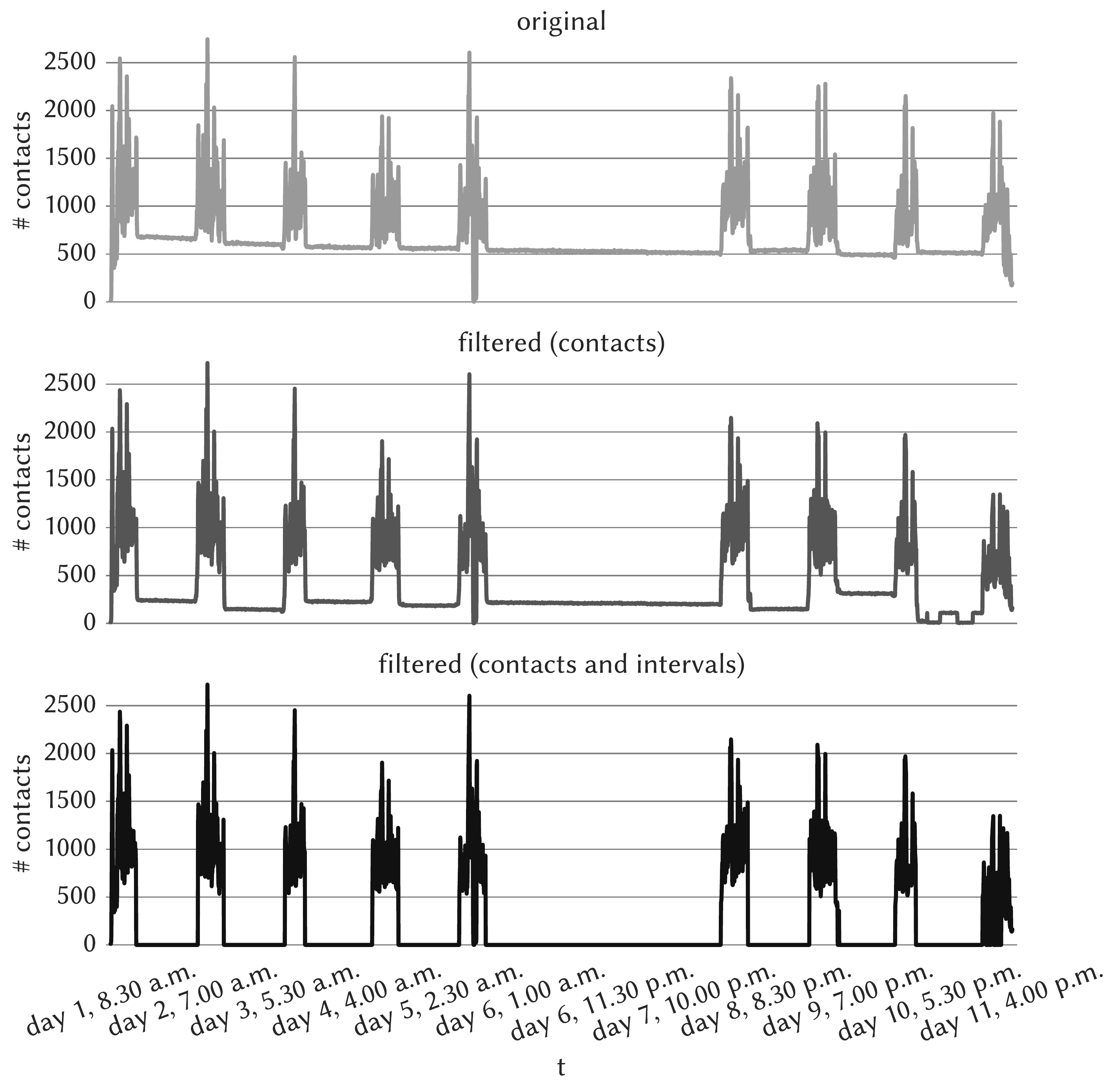}}
\vspace{-3mm}

\caption{\small \label{fig:anomaly} \textsf{HongKong} dataset: number of contacts ($y$ axis) per timestamp ($x$ axis) in the original data (top), after filtering anomalous contacts (middle), and after filtering anomalous contacts and intervals (bottom).}
\end{figure}

\section{Conclusions}
\label{sec:conclusions}


In this paper we introduced a notion of temporal core decomposition where each core is associated with its span, and developed efficient algorithms for computing all the span-cores, and only the maximal ones.
In our future work we will exploit span-cores for the computation of related notions, such as \emph{community search} or \emph{densest subgraph} in temporal networks. We will also study the role of maximal span-cores with large $\Delta$ in spreading processes on temporal networks. 
Furthermore,  span-cores represent features that can be used
for network finger-printing and classification, model validation, and could provide
support for new ways of visualizing large-scale time-varying graphs.

\begin{thebibliography}{10}

\bibitem{Alvarez-HamelinDBV05}
J.~I. Alvarez-Hamelin et al.
\newblock Large scale networks fingerprinting and visualization using the
  k-core decomposition.
\newblock In {\em NIPS}, 2005.


\bibitem{Angel}
A.~Angel et al.
\newblock Dense subgraph maintenance under streaming edge weight updates for
  real-time story identification.
\newblock {\em PVLDB}, 5(6), 2012.

\bibitem{DBLP:journals/bmcbi/BaderH03}
G.~D. Bader and C.~W.~V. Hogue.
\newblock An automated method for finding molecular complexes in large protein
  interaction networks.
\newblock {\em BMC Bioinformatics}, 4:2, 2003.

\bibitem{DBLP:conf/gd/BatageljMZ99}
V.~Batagelj, A.~Mrvar, and M.~Zaversnik.
\newblock Partitioning approach to visualization of large graphs.
\newblock In {\em Int. Symp. on Graph Drawing}, pages 90--97, 1999.

\bibitem{batagelj2011fast}
V.~Batagelj and M.~Zaver{\v{s}}nik.
\newblock Fast algorithms for determining (generalized) core groups in social
  networks.
\newblock {\em ADAC}, 5(2), 2011.

\bibitem{BerlingerioBBG09}
M.~Berlingerio, F.~Bonchi, B.~Bringmann, and A.~Gionis.
\newblock Mining graph evolution rules.
\newblock In {\em {ECML} {PKDD} 2009}.


\bibitem{BonchiBGS16}
F.~Bonchi, I.~Bordino, F.~Gullo, and G.~Stilo.
\newblock Identifying buzzing stories via anomalous temporal subgraph
  discovery.
\newblock In {\em WI 2016}.

\bibitem{bonchi14cores}
F.~Bonchi, F.~Gullo, A.~Kaltenbrunner, and Y.~Volkovich.
\newblock Core decomposition of uncertain graphs.
\newblock In {\em KDD}, 2014.

\bibitem{BringmannBBG10}
B.~Bringmann, M.~Berlingerio, F.~Bonchi, and A.~Gionis.
\newblock Learning and predicting the evolution of social networks.
\newblock {\em {IEEE} Intelligent Systems}, 25(4):26--35, 2010.


\bibitem{DiskCores}
J.~Cheng, Y.~Ke, S.~Chu, and M.~T. {\"{O}}zsu.
\newblock Efficient core decomposition in massive networks.
\newblock In {\em ICDE}, 2011.

\bibitem{das2011dynamic}
A.~Das~Sarma, A.~Jain, and C.~Yu.
\newblock Dynamic relationship and event discovery.
\newblock In {\em WSDM 2011}.

\bibitem{desmier2012cohesive}
E.~Desmier, M.~Plantevit, C.~Robardet, and J.-F. Boulicaut.
\newblock Cohesive co-evolution patterns in dynamic attributed graphs.
\newblock In {\em DS 2012}.

\bibitem{epasto2015efficient}
A.~Epasto, S.~Lattanzi, and M.~Sozio.
\newblock Efficient densest subgraph computation in evolving graphs.
\newblock In {\em WWW 2015}.

\bibitem{EppsteinLS10}
D.~Eppstein, M.~L{\"o}ffler, and D.~Strash.
\newblock Listing all maximal cliques in sparse graphs in near-optimal time.
\newblock In {\em ISAAC}, 2010.

\bibitem{erdi2013prediction}
P.~{\'E}rdi et al.
\newblock Prediction of emerging technologies based on analysis of the us
  patent citation network.
\newblock {\em Scientometrics}, 95(1):225--242, 2013.

\bibitem{GalimbertiBG17}
E.~Galimberti, F.~Bonchi, and F.~Gullo.
\newblock Core decomposition and densest subgraph in multilayer networks.
\newblock In {\em {CIKM} 2017}.

\bibitem{WeigthedCores}
A.~Garas, F.~Schweitzer, and S.~Havlin.
\newblock A k -shell decomposition method for weighted networks.
\newblock {\em New Journal of Physics}, 14(8), 2012.

\bibitem{GArcia2013}
D.~Garcia, P.~Mavrodiev, and F.~Schweitzer.
\newblock Social resilience in online communities: The autopsy of friendster.
\newblock {\em CoRR}, abs/1302.6109, 2013.

\bibitem{Gauvin2014}
L.~Gauvin, A.~Panisson, and C.~Cattuto.
\newblock Detecting the community structure and activity patterns of temporal
  networks: a non-negative tensor factorization approach.
\newblock {\em PLOS ONE}, 9(1):e86028, 2014.

\bibitem{Gemmetto2014}
V.~Gemmetto, A.~Barrat, and C.~Cattuto.
\newblock {Mitigation of infectious disease at school: targeted class closure
  vs school closure.}
\newblock {\em BMC infectious diseases}, 14(1):695, 2014.

\bibitem{DirectedCores}
C.~Giatsidis, D.~M. Thilikos, and M.~Vazirgiannis.
\newblock D-cores: measuring collaboration of directed graphs based on
  degeneracy.
\newblock {\em KAIS}, 35(2), 2013.

\bibitem{HealyJMA06}
J.~Healy, J.~Janssen, E.~E. Milios, and W.~Aiello.
\newblock Characterization of graphs using degree cores.
\newblock In {\em WAW}, 2006.

\bibitem{himmel2016enumerating}
A.-S. Himmel, H.~Molter, R.~Niedermeier, and M.~Sorge.
\newblock Enumerating maximal cliques in temporal graphs.
\newblock In {\em ASONAM 2016}.

\bibitem{inokuchi2010mining}
A.~Inokuchi and T.~Washio.
\newblock Mining frequent graph sequence patterns induced by vertices.
\newblock In {\em SDM 2010}.

\bibitem{jethava2015finding}
V.~Jethava and N.~Beerenwinkel.
\newblock Finding dense subgraphs in relational graphs.
\newblock In {\em ECML-PKDD 2015}.

\bibitem{Kitsak2010}
M.~Kitsak et al.
\newblock Identifying influential spreaders in complex networks.
\newblock {\em Nature Physics 6, 888}, 2010.

\bibitem{KortsarzP94}
G.~Kortsarz and D.~Peleg.
\newblock Generating sparse 2-spanners.
\newblock {\em J. Algorithms}, 17(2), 1994.

\bibitem{leung2010mining}
C.~W.-k. Leung, E.-P. Lim, D.~Lo, and J.~Weng.
\newblock Mining interesting link formation rules in social networks.
\newblock In {\em CIKM 2010}.

\bibitem{li2014efficient}
R.-H. Li, J.~X. Yu, and R.~Mao.
\newblock Efficient core maintenance in large dynamic graphs.
\newblock {\em IEEE Transactions on Knowledge and Data Engineering},
  26(10):2453--2465, 2014.

\bibitem{Fournet:PLOS2015}
R.~Mastrandrea, J.~Fournet, and A.~Barrat.
\newblock Contact patterns in a high school: A comparison between data
  collected using wearable sensors, contact diaries and friendship surveys.
\newblock {\em PLoS ONE}, 10(9):1--26, 09 2015.

\bibitem{MatulaB83}
D.~W. Matula and L.~L. Beck.
\newblock Smallest-last ordering and clustering and graph coloring algorithms.
\newblock {\em J. {ACM}}, 30(3), 1983.

\bibitem{mongiovi2013netspot}
M.~Mongiovi et al.
\newblock Netspot: Spotting significant anomalous regions on dynamic networks.
\newblock In {\em SDM 2013}.

\bibitem{DistributedCores1}
A.~Montresor, F.~D. Pellegrini, and D.~Miorandi.
\newblock Distributed k-core decomposition.
\newblock {\em TPDS}, 24(2), 2013.


\bibitem{rozenshtein2017finding}
P.~Rozenshtein, N.~Tatti, and A.~Gionis.
\newblock Finding dynamic dense subgraphs.
\newblock {\em ACM Transactions on Knowledge Discovery from Data (TKDD)},
  11(3):27, 2017.

\bibitem{sapienza2015detecting}
A.~Sapienza et al.
\newblock Detecting anomalies in time-varying networks using tensor
  decomposition.
\newblock In {\em ICDM Workshops 2015}.

\bibitem{StreamingCores}
A.~E. Sariy{\"{u}}ce, B.~Gedik, G.~Jacques{-}Silva, K.~Wu, and {\"{U}}.~V.
  {\c{C}}ataly{\"{u}}rek.
\newblock Streaming algorithms for k-core decomposition.
\newblock {\em {PVLDB}}, 6(6), 2013.

\bibitem{semertzidis2016best}
K.~Semertzidis, E.~Pitoura, E.~Terzi, and P.~Tsaparas.
\newblock Best friends forever (bff): Finding lasting dense subgraphs.
\newblock {\em arXiv:1612.05440}, 2016.

\bibitem{Stehle:2013}
J.~Stehl\'e, F.~Charbonnier, T.~Picard, C.~Cattuto, and A.~Barrat.
\newblock Gender homophily from spatial behavior in a primary school: A
  sociometric study.
\newblock {\em Social Networks}, 35:604--613, 2013.

\bibitem{Stehle:2011}
J.~Stehl\'e et al.
\newblock High-resolution measurements of face-to-face contact patterns in a
  primary school.
\newblock {\em PLoS ONE}, 6(8):e23176, 08 2011.

\bibitem{viard2016computing}
T.~Viard, M.~Latapy, and C.~Magnien.
\newblock Computing maximal cliques in link streams.
\newblock {\em Theoretical Computer Science}, 609:245--252, 2016.

\bibitem{wu2015core}
H.~Wu, J.~Cheng, Y.~Lu, Y.~Ke, Y.~Huang, D.~Yan, and H.~Wu.
\newblock Core decomposition in large temporal graphs.
\newblock In {\em Big Data (Big Data), 2015 IEEE International Conference on},
  pages 649--658. IEEE, 2015.

\bibitem{citeulike:298147}
S.~Wuchty and E.~Almaas.
\newblock Peeling the yeast protein network.
\newblock {\em Proteomics}, 5(2), 2005.


\bibitem{DBLP:journals/tjs/ZhangZCLZ10}
H.~Zhang, H.~Zhao, W.~Cai, J.~Liu, and W.~Zhou.
\newblock Using the k-core decomposition to analyze the static structure of
  large-scale software systems.
\newblock {\em J. Supercomputing}, 53(2), 2010.

\end{thebibliography}

\end{document}